\pgfplotsset{compat=1.18}
\newcolumntype{P}[1]{>{\centering\arraybackslash}p{#1}}
\DeclareMathOperator{\supp}{supp}
\newcommand{\KL}{D_\mathtt{KL}}
\newcommand{\fdiv}{D_f}
\newcommand{\Um}{D_\mathtt{U}}
\newcommand{\BS}{D_\mathtt{BS}}
\newcommand{\SSE}{D_\mathtt{UNR}}
\newcommand{\CB}{\mathtt{CB}}
\newcommand{\di}{{\rm d}}
\newcommand{\R}{\mathbb{R}}
\newcommand{\calF}{\mathcal{F}}
\newcommand{\calP}{\mathcal{P}}
\newcommand{\scrB}{\mathscr{B}}
\newcommand{\scrP}{\mathscr{P}}
\newcommand{\scrS}{\mathscr{S}}
\newcommand{\sfK}{\mathsf{K}}
\newcommand{\frH}{\mathfrak{h}}
\newcommand{\frX}{\mathfrak{X}}
\newcommand{\tr}{\text{tr}}
\DeclareMathOperator{\Tr}{Tr}
\newtheorem{theorem}{Theorem}
\newtheorem*{theorem*}{Theorem}
\newtheorem{proposition}[theorem]{Proposition}
\newtheorem{lemma}[theorem]{Lemma}
\newtheorem{corollary}[theorem]{Corollary}
\theoremstyle{definition}
\newtheorem{definition}[theorem]{Definition}
\theoremstyle{remark}
\newcommand\reallywidehat[1]{%
\savestack{\tmpbox}{\stretchto{%
  \scaleto{%
    \scalerel*[\wi\di thof{\ensuremath{#1}}]{\kern-.6pt\bigwedge\kern-.6pt}%
    {\rule[-\textheight/2]{1ex}{\textheight}}
  }{\textheight}%
}{0.5ex}}%
\stackon[1pt]{#1}{\tmpbox}%
}
\begin{document}

\title{Quantum relative entropy for unravelings of master equations}

\author{M. \surname{Ruibal Ortigueira}}
\affiliation{
Department of Mathematics and Computer Science,\\ Eindhoven University of Technology, P.\ O.\ Box 513, 5600 MB Eindhoven, The Netherlands
}

\author{R.J.P.T. \surname{de Keijzer}}
\altaffiliation[Corresponding author: ]{r.j.p.t.d.keijzer@tue.nl }
\affiliation{
Department of Applied Physics and Science Education, Eindhoven University of Technology, P. O. Box 513, 5600 MB Eindhoven, The Netherlands}

\author{L. Y. \surname{Visser}}
\affiliation{
Department of Mathematics and Computer Science,\\ Eindhoven University of Technology, P.\ O.\ Box 513, 5600 MB Eindhoven, The Netherlands
}

\author{O. \surname{Tse}}
\affiliation{
Department of Mathematics and Computer Science,\\ Eindhoven University of Technology, P.\ O.\ Box 513, 5600 MB Eindhoven, The Netherlands
}

\author{S.J.J.M.F. \surname{Kokkelmans}}
\affiliation{
Department of Applied Physics and Science Education, Eindhoven University of Technology, P. O. Box 513, 5600 MB Eindhoven, The Netherlands}

\date{\today}

\begin{abstract}
This work explores connections between the quantum relative entropy of two faithful states $\rho,\sigma$ (i.e. full-rank density matrices) and the Kullback-Leibler divergences of classical measures $\mu,\nu$. Here, $\mu$ and $\nu$ are measures on the space of pure states, realizing $\rho$ and $\sigma$ respectively. The motivation for this result is to establish a notion of quantum relative entropy in the space of pure state distributions, which are the resulting objects of unravelings of the Lindblad equation, such as the stochastic Schr\"{o}dinger equation. Our results show that the measures that achieve the minimal KL divergence are those supported on a (possibly non-orthogonal) common basis between $\rho$ and $\sigma$. Using the classical and quantum data-processing inequalities, our notion of quantum relative entropy is shown to be equivalent to the Belavkin-Staszewski entropy on states, revealing new insights on this quantity. Furthermore, the common basis is used to provide a novel proof of contraction of the relative entropy under Lindblad flow and offers insights into results from large deviation theory.
\end{abstract}
\maketitle
\onecolumngrid

\section{Introduction}

An important notion in both classical probability theory and quantum mechanics is relative entropy. This asymmetric functional takes two objects from the same space and outputs a positive number indicating the dissimilarity of the two objects. In the classical case, these input objects are probability distributions; in the quantum case, they are states. Relative entropy has many applications in measure theory, large deviation theory, and information theory \cite{relativeentropybackground}. For classical distributions, the relative entropy is also known as the \emph{Kullback-Leibler} (KL) \emph{divergence} $\KL$ \cite{kullback}. Using an axiomatic approach in which the entropy measure has to satisfy several properties such as monotonicity and positivity, Alfr\'{e}d R\'{e}nyi succeeded in deriving the most general form of classical entropy, the R\'{e}nyi relative entropy or R\'{e}nyi divergence \cite{renyientropy2}. In the limit, this reduces to the KL divergence.

\medskip
Within the quantum domain, there are several established notions of relative entropy, all obeying a certain set of properties such as additivity and unitary invariance \cite{renyientropy}. Efforts to establish the most general quantum relative entropy resulted in the quantum Petz-R\'{e}nyi entropy introduced by Petz in 1984 \cite{PETZ198657}, which in the limit converges to the \emph{Umegaki relative entropy} $\Um$. The Umegaki relative entropy is the earliest notion of a relative entropy for non-commutative algebras, and was introduced in Ref.~\cite{Umegaki1962ConditionalEI} in 1962. This is the most studied form of quantum relative entropy due to its operational interpretation regarding quantum hypothesis testing via quantum versions of Stein's lemma \cite{steinslemma1,quantumstein2} and Sanov's theorem \cite{quantumsanov1}. A different generalization of the R\'{e}nyi entropy is the maximal or geometric R\'{e}nyi entropy which was introduced later in Ref.~\cite{Matsumoto2018AF-Divergenceb}. In the same limit as the Petz-R\'{e}nyi entropy, the geometric R\'{e}nyi converges to an entropy that is of special interest in this work, the \emph{Belavkin-Staszewski} (BS) \emph{relative entropy} $\BS$ \cite{quantumchannelBS1}. The BS relative entropy was first introduced generally in Ref.~\cite{belavkinstaszewski} in 1982 and in operator form in Ref.~\cite{belavkinoperator} in 1989. Recent interest in BS entropy has increased because of its application in quantum channel capacities \cite{quantumchannelBS1,strengtheneddpi}. For simultaneously diagonizable states, all these various quantum relative entropies reduce to the classical KL divergence of the diagonal entries. Figure~\ref{fig:background} shows a diagram of the relations of the aforementioned (quantum) relative entropies.

\begin{figure}[b]
    \centering
    \includegraphics[width=0.75\linewidth]{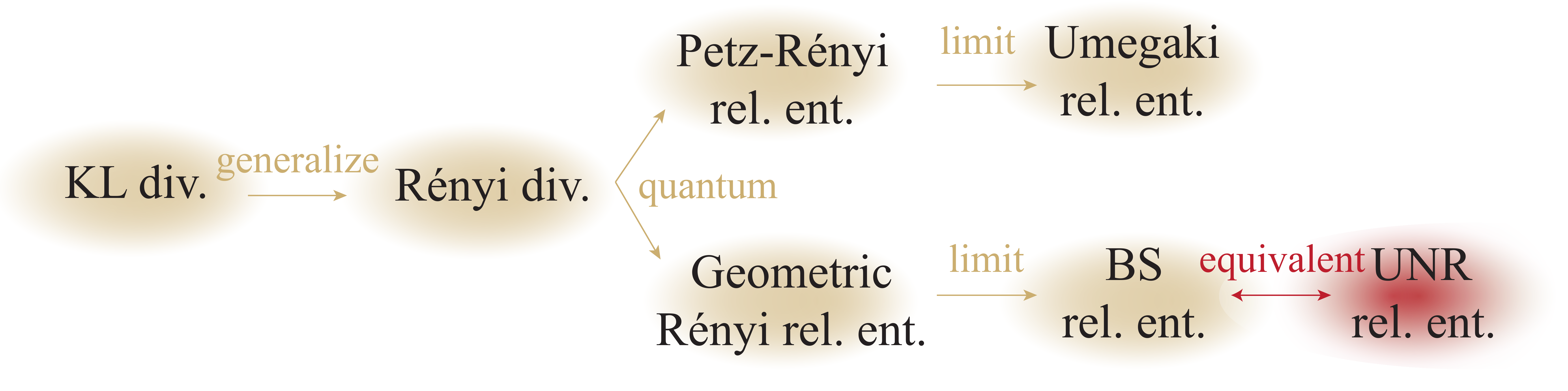}
    \caption{Classical to quantum relative entropies. The unraveling entropy (UNR) proposed in this work is equivalent to the BS entropy, which is the limit of the geometric R\'{e}nyi entropy.} 
    \label{fig:background}
\end{figure}

\medskip
In this work, we introduce a formulation of relative entropy stemming from a different perspective, which arises from the framework of noisy quantum systems. Noise and decoherence play an important role in the accuracy of current-generation quantum computers and simulators \cite{ssetheory1,fvqoc}. Closed quantum systems evolve according to the Schr\"{o}dinger equation, where an input state $|\psi\rangle$ evolves under a Hamiltonian $H$ and purity is preserved. However, decoherence comes into play when a system interacts with its environment, and in order to describe this accurately, one has to consider an open system. The quantum system is now influenced by system-bath interactions, resulting in a mixed state which can only be represented using density matrices $\rho$. The dynamics of such an open system are most commonly described by the Lindblad equation \cite{lindblad1}
\begin{equation}
\label{eq:lindblad}
    \partial_t\rho =\mathcal{L}(\rho)=-i[H,\rho]+\sum_j \gamma_j^2S_j \rho S_j^\dagger-\frac{1}{2} \gamma_j^2\{S_j^\dagger S_j,\rho\},\quad \rho(0)=\rho_0, 
\end{equation}
where $S_j$ are the Lindblad jump operators, and $\gamma_j\geq0$ are loss rates. Recently, interest in unravelings of the Lindblad equation has surged, especially with a focus on physically-inspired instances \cite{unraveltome,metounravel1,physicalunravel2,physicalunravel3}. These unravelings are practical for simulation purposes, yet also provide a higher degree of understanding of the system-bath couplings. A specific example of an unraveling is the stochastic Schr\"{o}dinger equation (SSE) \cite{sse1}, which was introduced in Refs.~\cite{ssetheory1,fvqoc,physicalunravel3,colorednoisepaper} as a physically-inspired unraveling describing classical control noise. The SSE is given by 
\begin{equation}
\label{eq:sse}
    \di |\psi\rangle=-iH|\psi\rangle\di t-\frac{1}{2}\sum_j \gamma_j^2 S_j^\dagger S_j|\psi\rangle\di [X_j]+i\sum_j \gamma_j S_j |\psi\rangle \di X_{j}.
\end{equation}
Here, $X_{j}$ are classical noise processes and are taken to be semimartingales, the biggest class of processes for which It\^{o} calculus is valid \cite{ito}. $[X_j]$ are their corresponding quadratic variations. All unravelings of this form induce a classical probability distribution $\mu=\text{Law}(|\psi\rangle)$ on the associated Hilbert space \cite{colorednoisepaper}. Building on this insight, we establish a physically-inspired notion of quantum relative entropy based on classical probability distributions on pure states. We show that it reduces to the Belavkin-Staszewski relative entropy, delivering a novel understanding of this quantity. These results will elevate our understanding of quantum relative entropies and their relationship to physical realizations.

\medskip
This work is structured as follows. Section~\ref{sec:background} introduces preliminary results that are necessary in our discussion of quantum relative entropy, subdivided in classical and quantum results. In Sec.~\ref{sec:results_marcos}, the main result regarding quantum relative entropies for unravelings is posited, after which it is proven in two separate parts. In Sec.~\ref{sec:corollary}, two corollary results of the unraveling approach are given, related to contractions of the Lindblad equation and large deviation theory. Section~\ref{sec:discussion} concludes the results and discusses implications, conjectures, and possible areas for future work.

\section{Background}
\label{sec:background}
This section concisely introduces several concepts in classical probability theory and quantum information necessary for reaching the results proven in Sec.~\ref{sec:results_marcos}. Throughout, we consider a finite-dimensional Hilbert space $\frH$ of dimension $n$ (for $m$ qubits $n=2^m$). We denote the space of bounded linear operators on $\frH$ by $\scrB(\frH)$, and the set of density operators, henceforth called \emph{states}, by
\begin{align}
    \scrS(\frH) := \bigl\{ \rho\in \scrB(\frH): \rho^\dagger=\rho,\; \rho \succeq 0,\;\tr[\rho]=1\bigr\}.
\end{align}
We equip $\scrS(\frH)$ with the trace distance $d_\mathtt{TR}$, given by $d_\mathtt{TR}(\rho,\sigma)=\frac{1}{2}\|\rho-\sigma\|_1=\frac{1}{2}\Tr[\sqrt{(\rho-\sigma)^\dagger(\rho-\sigma)}]$, making $(\scrS(\frH),d_\mathtt{TR})$ a metric space. We further denote $\scrS_+(\frH)$ as the set of positive definite density operators ($\rho\succ 0$), henceforth called \emph{faithful states}. States that are rank-$1$ projection operators ($\rho^2=\rho$) are called \emph{pure states}. The set of pure states may be identified with the unit sphere in $\frH$, which we denote by $\scrP(\frH)$, i.e., 
\[
     \frH\supset\scrP(\frH):=\bigl\{ \psi\in\frH: \|\psi\|_\frH = 1\bigr\}\simeq\bigl\{|\psi\rangle\langle\psi|: \psi\in\frH,\;\|\psi\|_\frH = 1\bigr\}  \subset \scrS(\frH).
\]

The following compactness result for the set of pure states is proven in Ref.~\cite{manifolds1997introduction}.

\begin{proposition}[Compactness of $\scrP(\frH)$]
\label{prop:fubinistudy}
    Let $\frH$ be  finite-dimensional Hilbert space with inner product $\braket{\cdot|\cdot}$. Then the set of pure states $\scrP(\frH)$ is compact under the Fubini-Study metric 
\begin{equation}
    d_{\mathtt{FS}}(\psi, \varphi) = \arccos{|\langle\psi| \varphi\rangle|},\qquad \psi,\varphi\in\scrP(\frH).
\end{equation}
\end{proposition}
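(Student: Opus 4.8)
The plan is to reduce compactness of $(\scrP(\frH), d_{\mathtt{FS}})$ to the Heine--Borel theorem on the underlying unit sphere, exploiting that $\frH$ is finite-dimensional. First I would note that as a real vector space $\frH \cong \R^{2n}$, so the unit sphere $S := \{\psi \in \frH : \|\psi\|_\frH = 1\}$ is closed and bounded, and hence compact in the norm topology by Heine--Borel. Since in a metric space compactness is equivalent to sequential compactness, it then suffices to extract, from an arbitrary sequence in $\scrP(\frH)$, a subsequence that converges in the Fubini--Study metric.

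The key technical step is to relate convergence in the ambient norm to convergence in $d_{\mathtt{FS}}$. Given any sequence $(\psi_k) \subset S$, compactness of $S$ yields a subsequence $(\psi_{k_j})$ and a limit $\psi^\ast \in S$ with $\|\psi_{k_j} - \psi^\ast\|_\frH \to 0$. I would then use continuity of the inner product to get $|\langle \psi_{k_j} | \psi^\ast \rangle| \to |\langle \psi^\ast | \psi^\ast \rangle| = 1$, and since $\arccos$ is continuous on $[-1,1]$ with $\arccos(1)=0$, it follows that $d_{\mathtt{FS}}(\psi_{k_j}, \psi^\ast) = \arccos |\langle \psi_{k_j} | \psi^\ast \rangle| \to 0$. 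Thus every sequence in $\scrP(\frH)$ admits a subsequence converging in the Fubini--Study metric, which establishes sequential compactness and hence compactness.

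The hard part will not be the estimates, which are routine, but the bookkeeping of the phase identification implicit in $d_{\mathtt{FS}}$: because $d_{\mathtt{FS}}(\psi, e^{i\theta}\psi) = 0$, the Fubini--Study distance is only a genuine metric on the projectors $|\psi\rangle\langle\psi|$ (equivalently on $\mathbb{CP}^{n-1}$), not on the representatives in $S$. The argument above is insensitive to this, since it produces an honest norm-convergent representative whose image under the quotient map $\psi \mapsto |\psi\rangle\langle\psi|$ is the desired limit; equivalently, one packages it as: the quotient map $S \to (\scrP(\frH), d_{\mathtt{FS}})$ is continuous and surjective, and the continuous image of a compact set is compact. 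A fully equivalent route would be to transfer the problem into $\scrS(\frH)$ via the pure-state identity $d_{\mathtt{TR}}(|\psi\rangle\langle\psi|, |\varphi\rangle\langle\varphi|) = \sqrt{1 - |\langle\psi|\varphi\rangle|^2} = \sin d_{\mathtt{FS}}(\psi,\varphi)$, so that $d_{\mathtt{FS}} = \arcsin d_{\mathtt{TR}}$ is topologically equivalent to the trace distance, and then invoke that the rank-one projectors form a closed, bounded subset of the finite-dimensional space $\scrB(\frH)$.
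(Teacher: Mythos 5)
Your argument is correct. Note, however, that the paper does not prove this proposition at all: it is stated as a known result and attributed to a reference on manifolds, so there is no in-paper proof to compare against. Your route is the standard one and is sound: Heine--Borel gives compactness of the unit sphere of the finite-dimensional space $\frH$, norm convergence of representatives implies $|\langle\psi_{k_j}|\psi^\ast\rangle|\to 1$ and hence $d_{\mathtt{FS}}(\psi_{k_j},\psi^\ast)\to 0$ by continuity of $\arccos$, and sequential compactness in a metric space yields compactness. You are also right to flag the phase subtlety --- $d_{\mathtt{FS}}$ is only a pseudometric on unit vectors and a genuine metric on the projectors $|\psi\rangle\langle\psi|$ (i.e.\ on $\mathbb{CP}^{n-1}$) --- a point the paper itself glosses over by writing $\scrP(\frH)$ simultaneously as the unit sphere and as the set of rank-one projectors; packaging the argument as ``continuous surjective image of the compact sphere under the quotient map'' handles this cleanly. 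The alternative via $d_{\mathtt{FS}}=\arcsin d_{\mathtt{TR}}$ is equally valid and has the minor advantage of connecting directly to the trace-distance topology on $\scrS(\frH)$ used elsewhere in the paper (e.g.\ in the proof of Lemma~\ref{lem:continuity lambda}, which uses exactly the identity $d_{\mathtt{TR}}=\sin d_{\mathtt{FS}}$). The only omission, if one wanted a fully self-contained statement, is a verification that $d_{\mathtt{FS}}$ satisfies the triangle inequality, but that concerns the ``metric'' part of the claim rather than compactness.
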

Henceforth, we equip $\scrP(\frH)$ with the Fubini-Study metric $d_{\mathtt{FS}}$, thus making $(\scrP(\frH),d_{\mathtt{FS}})$ a compact metric space.

\subsection{Classical probability measures}
We denote by $\calP(\frX)$ the set of probability measures on a measurable set $(\frX,\calF)$. The following definitions are adapted from Ref.~\cite{Polyanskiy2025InformationLearning}.


\begin{definition}[KL divergence]
    Let $(\frX, \calF)$ be a measurable space and $\mu, \nu\in\calP(\frX)$ be two probability measures. The Kullback-Leibler (KL) divergence of $\mu$ with respect to $\nu$ is defined as 
    \begin{align}
    \KL(\mu \|\nu): =
        \begin{cases}
        \displaystyle\int_\frX \frac{\di \mu}{\di \nu} \log \biggl( \frac{\di \mu}{\di \nu} \biggr) \nu(\di x)= \int_\frX  \log \biggl( \frac{\di \mu}{\di \nu} \biggr) \mu(\di x)  & \text{if } \mu \ll \nu, \\ 
        +\infty & \text{else,}
        \end{cases}
\end{align}
where $\mu\ll\nu$ indicates absolute continuity of $\mu$ with respect to\ $\nu$, and $\di \mu/\di \nu$ denotes the $\nu$-density of $\mu$.
\end{definition}
It is worth noting that the KL divergence is an important example in a wider set of metrics called $f$-divergences. The results of this work can be generalized to $f$-divergences, which is the subject of App.~\ref{app:fdivergences}. For clarity and accessibility, we restrict ourselves to KL divergences and relative entropies.  All $f$-divergences enjoy the Data Processing Inequality (DPI). To understand the DPI, Markov kernels and composition measures are introduced \cite{Cinlar2011ProbabilityStochastics}.
\begin{definition}[Markov kernel]
\label{def:markovkernel}
    Let $(\frX, \calF)$ be a measurable space. A Markov kernel $\sfK : \frX {\times} \calF \to [0,1]$ on $(\frX, \calF)$ is a map satisfying the following properties:
    \begin{enumerate}[label=(\arabic*)]
        \item For every $x \in \frX$, $\sfK(x,\cdot)\in\calP(\frX)$;
        \item For any measurable set $A \in \calF$, the map $ x \mapsto \sfK(x,A)$ is $\calF$-measurable.
    \end{enumerate}
\end{definition}
\begin{definition}[Composition measure]
    Let $\mu\in\calP(\frX)$ and $\sfK$ be a Markov kernel on a measurable space $(\frX, \calF)$. The composition measure $\sfK{\circ}\mu\in\calP(\frX)$ is defined as
    \begin{equation}
        \sfK{\circ}\mu(A) := \int_\frX \sfK(x,A)\mu(\di x),\qquad A\in\calF.
    \end{equation}
\end{definition}

\begin{proposition}[DPI for KL divergence]\label{thm: finite DPI}
    Let $\sfK$ be a Markov kernel on a measurable space $(\frX, \calF)$. Then,
    \begin{equation}
        \KL(\sfK{\circ}\mu\|\sfK{\circ}\nu) \leq \KL(\mu\| \nu)\qquad\text{for every $\mu,\nu\in\calP(\frX)$.}
    \end{equation}
\end{proposition}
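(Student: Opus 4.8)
The plan is to reduce the inequality to a single application of Jensen's inequality after lifting everything to the product space $\frX\times\frX$. First I would dispose of the trivial case: if $\mu\not\ll\nu$ the right-hand side is $+\infty$ and there is nothing to prove, so I may assume $\mu\ll\nu$ and set $f:=\di\mu/\di\nu$, a $\nu$-density.

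Next I would introduce the two joint measures on $\frX\times\frX$,
\begin{equation}
    \pi(\di x,\di y):=\mu(\di x)\,\sfK(x,\di y),\qquad \tilde\pi(\di x,\di y):=\nu(\di x)\,\sfK(x,\di y),
\end{equation}
and let $X,Y$ denote the two coordinate maps. Because both joints are built from the \emph{same} kernel $\sfK$, a direct computation shows $\pi\ll\tilde\pi$ with density $\di\pi/\di\tilde\pi(x,y)=f(x)$ depending on the first coordinate only. By construction the $Y$-marginals of $\pi$ and $\tilde\pi$ are precisely $P:=\sfK{\circ}\mu$ and $Q:=\sfK{\circ}\nu$.

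The heart of the argument is to identify the output density. I would first check $P\ll Q$: if $Q(A)=\int\sfK(x,A)\,\nu(\di x)=0$ then $\sfK(\cdot,A)=0$ holds $\nu$-almost everywhere, hence $\mu$-almost everywhere since $\mu\ll\nu$, which forces $P(A)=\int\sfK(x,A)\,\mu(\di x)=0$. I would then show that the Radon-Nikodym derivative equals the conditional expectation of $f$ given the output,
\begin{equation}
    h(y):=\frac{\di P}{\di Q}(y)=\mathbb{E}_{\tilde\pi}\bigl[f(X)\mid Y=y\bigr].
\end{equation}
This follows from the defining property of the density: for every $A\in\calF$ one has $\int_A \mathbb{E}_{\tilde\pi}[f\mid Y]\,\di Q=\int_{\frX\times A} f(x)\,\tilde\pi(\di x,\di y)=\pi(\frX\times A)=P(A)$.

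With this identity established, the conclusion is immediate from Jensen's inequality applied to the convex map $\varphi(t)=t\log t$. Conditioning on $Y$ gives $\varphi\bigl(\mathbb{E}_{\tilde\pi}[f\mid Y]\bigr)\le\mathbb{E}_{\tilde\pi}[\varphi(f)\mid Y]$, and integrating against $Q$ while using the tower property yields
\begin{equation}
    \KL(P\|Q)=\int\varphi(h)\,\di Q\le\mathbb{E}_{\tilde\pi}[\varphi(f)]=\int_\frX f\log f\,\di\nu=\KL(\mu\|\nu).
\end{equation}
I expect the main obstacle to be the middle step, namely the rigorous justification that $\di P/\di Q$ coincides with $\mathbb{E}_{\tilde\pi}[f\mid Y]$; this rests on the existence of regular conditional expectations, i.e.\ a disintegration of $\tilde\pi$ along $Y$. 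If one instead has the Donsker-Varadhan variational formula for $\KL$ available, the measure-theoretic step can be avoided: for any bounded test function $g$ on the output space, set $h(x):=\log\int e^{g(y)}\sfK(x,\di y)$ and use concavity of $\log$ to get $\int g\,\di P-\log\int e^{g}\,\di Q\le\int h\,\di\mu-\log\int e^{h}\,\di\nu\le\KL(\mu\|\nu)$, after which taking the supremum over $g$ gives the claim.
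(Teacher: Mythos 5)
Your proof is correct. Note, however, that the paper does not prove Proposition~\ref{thm: finite DPI} at all: it is stated as a standard fact and attributed to the cited literature (Polyanskiy--Wu), so there is no in-paper argument to compare against. Your first route is the classical one: lift $\mu,\nu$ to the joint measures $\mu\otimes\sfK$ and $\nu\otimes\sfK$, observe that the joint density is $f(X)=\di\mu/\di\nu$ pulled back through the first coordinate, identify $\di(\sfK{\circ}\mu)/\di(\sfK{\circ}\nu)$ with $\mathbb{E}_{\tilde\pi}[f(X)\mid Y]$, and apply conditional Jensen to $\varphi(t)=t\log t$. All the individual verifications ($\pi\ll\tilde\pi$, $P\ll Q$, the defining identity for the conditional expectation) are sound. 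Your stated worry about needing a disintegration of $\tilde\pi$ along $Y$ is unnecessary: conditional Jensen holds for the conditional expectation with respect to the sub-$\sigma$-algebra $\sigma(Y)$ (e.g.\ via the representation of a convex function bounded below as a countable supremum of affine functions), and $f(X)$ is $\tilde\pi$-integrable since $\int f\,\di\nu=1$; no regular conditional distribution is required. The Donsker--Varadhan fallback you sketch is also valid (boundedness of $g$ gives boundedness of $h$), and has the advantage of sidestepping Radon--Nikodym arguments entirely, though it presupposes the variational formula. Either route would serve as a complete proof were the paper to include one.
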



Intuitively, this implies that two probability measures become harder to distinguish when the same operation is applied to them. This is logical, as the operation can only obscure or discard the features that differentiate the measures, not introduce new ones.
\\

The KL divergence has found a central role in the study of hypothesis testing and large deviations theory. Sanov's theorem establishes it as the exponential decay rate of the probability that the empirically distribution of a system deviates from the true, underlying distribution \cite{Sanov}. In Sec.~\ref{sec:largedev}, we present a corollary of our results related to large deviation theory, for which the following two propositions are necessary.

\begin{proposition}[Sanov's theorem]
\label{prop: sanov}
    Let $(\frX, \mathcal{F})$ be a measurable space, where $\frX$ is a Polish space. Let $\nu \in \mathcal{P}(\frX)$. Consider a sequence of i.i.d. random variables $X_1,X_2,...$ on $\frX$ with law $\nu$. Let $\nu_n$ be the empirical measure of the sample $X_1,...,X_n$, given by $\nu_n = \frac{1}{n}\sum_{i=1}^n \delta_{X_i}$. The following Large Deviation Principle (LDP) holds
    \begin{gather}
     \liminf_{n\rightarrow \infty}\frac{1}{n}\log \mathbb{P}(\nu_n\in O)\ge -\inf_{\mu \in O} \KL(\mu\Vert \nu),\\
     \limsup_{n\rightarrow \infty}\frac{1}{n}\log \mathbb{P}(\nu_n\in C)\le -\inf_{\mu \in C}\KL(\mu\Vert \nu),
\end{gather}
    where $O$ denotes an open set in $\mathcal{P}(\frX)$, equipped with the weak topology, and $C$ denotes a closed one. Note that $\nu_n$ is a random variable by virtue of the sampling.
\end{proposition}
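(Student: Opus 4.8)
The plan is to establish the lower and upper large-deviation bounds separately. The lower bound I would obtain by a change-of-measure (tilting) argument, exploiting that the empirical measure concentrates on the law of the samples; the upper bound I would first prove for finite alphabets via the method of types, and then lift to a general Polish space by coarse-graining and invoking the data-processing inequality (\refProposition{thm: finite DPI}). In both bounds the rate function $\mu\mapsto\KL(\mu\Vert\nu)$ emerges as the exponential cost of forcing the empirical measure away from $\nu$.

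For the lower bound, fix an open set $O$ and a target $\mu\in O$; we may assume $\KL(\mu\Vert\nu)<\infty$, so that $\mu\ll\nu$. The key idea is to sample from $\mu$ rather than $\nu$. Under the i.i.d. law $\mu^{\otimes n}$, Varadarajan's theorem (Glivenko--Cantelli in the weak topology) gives $\nu_n\to\mu$ weakly almost surely, so $\mathbb{P}_\mu(\nu_n\in O)\to 1$ since $O$ is a weak neighbourhood of $\mu$. Rewriting the $\nu$-probability as an expectation under $\mu^{\otimes n}$ through the Radon--Nikodym factor, $\mathbb{P}_\nu(\nu_n\in O)=\mathbb{E}_\mu\bigl[\mathbbm{1}_{\{\nu_n\in O\}}\exp(-\sum_{i=1}^n\log\tfrac{\di\mu}{\di\nu}(X_i))\bigr]$, and restricting to the event on which both $\nu_n\in O$ and $\frac1n\sum_i\log\tfrac{\di\mu}{\di\nu}(X_i)\le\KL(\mu\Vert\nu)+\epsilon$ — an event of probability tending to $1$ by the law of large numbers, since $\mathbb{E}_\mu[\log\tfrac{\di\mu}{\di\nu}(X_1)]=\KL(\mu\Vert\nu)$ — yields $\mathbb{P}_\nu(\nu_n\in O)\ge e^{-n(\KL(\mu\Vert\nu)+\epsilon)}(1-o(1))$. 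Letting $\epsilon\downarrow 0$ and optimizing over $\mu\in O$ gives the stated $\liminf$ bound.

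For the upper bound, I would first take $\frX$ finite. By the method of types, the empirical measure assumes at most $(n+1)^{\lvert\frX\rvert}$ distinct values; each sequence realizing a given type $\mu$ has probability $\exp\bigl(n\sum_x\mu(x)\log\nu(x)\bigr)$, while the number of such sequences is $\exp\bigl(-n\sum_x\mu(x)\log\mu(x)+o(n)\bigr)$, so that $\mathbb{P}(\nu_n=\mu)=\exp(-n\KL(\mu\Vert\nu)+o(n))$. Summing over the polynomially many types lying in a closed set $C$ produces the bound, the polynomial prefactor being absorbed in the limit. To reach a general Polish $\frX$, I would partition it into finitely many measurable cells; the induced quotient map is a deterministic Markov kernel, so by \refProposition{thm: finite DPI} the KL divergence of the coarse-grained measures never exceeds $\KL(\mu\Vert\nu)$. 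Applying the finite-alphabet bound to the pushed-forward empirical measures, refining the partition so that the coarse-grained divergences increase to $\KL(\mu\Vert\nu)$, and using exponential tightness to control mass escaping to infinity then assembles the $\limsup$ bound.

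The main obstacle is the upper bound for non-finite $\frX$. The finite computation and the lower bound are comparatively routine, but the passage to the Polish setting requires simultaneously (i) exponential tightness of the sequence $\nu_n$, (ii) lower semicontinuity of $\mu\mapsto\KL(\mu\Vert\nu)$ in the weak topology, and (iii) verifying that the partition-indexed rate functions genuinely converge upward to $\KL(\mu\Vert\nu)$ — which is precisely where the data-processing inequality together with a monotone-refinement argument is indispensable. Coordinating these three ingredients, rather than any single estimate, is the delicate part of the proof.
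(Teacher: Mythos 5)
The paper does not prove this proposition: it is imported verbatim from the literature (the citation to Sanov), so there is no internal argument to compare against. Your sketch is the standard textbook proof (tilting for the lower bound; method of types plus coarse-graining for the upper bound) and is correct in outline. The lower bound is complete as written: with $\KL(\mu\Vert\nu)<\infty$ the log-likelihood ratio is $\mu$-integrable (since $x\log x\ge -1/e$ forces $\int (f\log f)^- \di\nu\le 1/e$), so the law of large numbers applies, and Varadarajan's theorem supplies $\mathbb{P}_{\mu^{\otimes n}}(\nu_n\in O)\to1$; the finite-alphabet type counting is likewise standard and correct.

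The one step that is genuinely incomplete is the final assembly of the Polish-space upper bound. You correctly list the three ingredients --- exponential tightness, lower semicontinuity of $\KL(\cdot\Vert\nu)$, and the Gelfand--Yaglom--Perez identity $\KL(\mu\Vert\nu)=\sup_{\mathcal{A}}\KL(\mu_{\mathcal{A}}\Vert\nu_{\mathcal{A}})$ over finite partitions (which is where \refProposition{thm: finite DPI} enters) --- but the bound you actually obtain from a fixed partition $\mathcal{A}$ is
\begin{equation}
    \limsup_{n\to\infty}\frac1n\log\mathbb{P}(\nu_n\in C)\;\le\;-\inf_{\mu\in C}\KL(\mu_{\mathcal{A}}\Vert\nu_{\mathcal{A}}),
\end{equation}
and passing from $\sup_{\mathcal{A}}\inf_{\mu\in C}$ to the desired $\inf_{\mu\in C}\sup_{\mathcal{A}}$ is not automatic: it requires the Dawson--G\"artner projective-limit theorem, or equivalently a compactness/min--max argument using that the sublevel sets of the rate function are compact (which is what exponential tightness plus lower semicontinuity delivers). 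There is also a smaller point you gloss over: a set $C$ closed in the weak topology must be related to a closed set of coarse-grained measures, which is cleanest if one first proves the upper bound in the finer $\tau$-topology generated by the partition maps and then restricts to weakly closed sets. Neither issue is a wrong turn --- both are resolved exactly along the lines you indicate --- but as written the $\limsup$ bound is asserted rather than derived.
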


The Contraction Principle formalises how a continuous function extends an LDP on one space to another \cite{Hollander_2008}. In the context of this work, this result is stated directly for the empirical measure in Prop.~\ref{prop: sanov}.

\begin{proposition}[Contraction Principle]
\label{thm: contraction}
    Let $\frX_1,\frX_2$ be Polish spaces. Let $\nu_n$ be a sequence of $\mathcal{P}(\frX_1)$-valued random variables that satisfies an LDP with rate $n$ and rate function $I$. If $\varGamma: \mathcal{P}(\frX_1) \to \frX_2$ is a continuous map, then for any Borel sets $O\subset \frX_2$ open and $C\subset\frX_2$ closed, the sequence $\varGamma(\nu_n)$ satisfies the LDP
        \begin{gather}
     \liminf_{n\rightarrow \infty}\frac{1}{n}\log \mathbb{P}(\varGamma(\nu_n)\in O)\ge - \inf_{y \in O}\inf_{\mu \in \varGamma^{-1}(y)} I(\mu),\\
     \limsup_{n\rightarrow \infty}\frac{1}{n}\log \mathbb{P}(\varGamma(\nu_n)\in C)\le - \inf_{y \in C}\inf_{\mu \in \varGamma^{-1}(y)} I(\mu).
\end{gather}
    Moreover, $J(y) = \inf_{\mu \in \varGamma^{-1}(y)} I(\mu)$ is lower semi-continuous.
\end{proposition}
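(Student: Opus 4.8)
The plan is to transport the assumed large deviation principle for $\nu_n$ through the continuous map $\varGamma$, using two elementary ingredients: continuity turns preimages of open (resp.\ closed) sets into open (resp.\ closed) sets, and the infimum of $I$ over a preimage decomposes along the fibres of $\varGamma$. For the latter, for any $A\subset\frX_2$ one has $\varGamma^{-1}(A)=\bigcup_{y\in A}\varGamma^{-1}(y)$, so with the convention $\inf_{\emptyset}=+\infty$,
\begin{equation}
    \inf_{\mu\in\varGamma^{-1}(A)} I(\mu) = \inf_{y\in A}\inf_{\mu\in\varGamma^{-1}(y)} I(\mu) = \inf_{y\in A} J(y).
\end{equation}
This identity is exactly what converts the rate function $I$ on $\mathcal{P}(\frX_1)$ into the contracted rate function $J$ on $\frX_2$.

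For the lower bound, fix an open set $O\subset\frX_2$. Since $\varGamma(\nu_n)\in O$ if and only if $\nu_n\in\varGamma^{-1}(O)$, and $\varGamma^{-1}(O)$ is open by continuity of $\varGamma$, the assumed LDP lower bound for $\nu_n$ gives
\begin{equation}
    \liminf_{n\to\infty}\frac{1}{n}\log\mathbb{P}(\varGamma(\nu_n)\in O) \ge -\inf_{\mu\in\varGamma^{-1}(O)} I(\mu) = -\inf_{y\in O} J(y),
\end{equation}
the last equality being the fibre identity above. The upper bound is entirely analogous: for closed $C\subset\frX_2$ the set $\varGamma^{-1}(C)$ is closed, and the assumed LDP upper bound together with the same identity yields $\limsup_{n\to\infty}\frac{1}{n}\log\mathbb{P}(\varGamma(\nu_n)\in C)\le-\inf_{y\in C}J(y)$. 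This establishes both displayed inequalities.

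It then remains to show that $J$ is lower semi-continuous, i.e.\ that each sublevel set $\{y:J(y)\le\alpha\}$ is closed. Here I would use that $I$ is a \emph{good} rate function, meaning its sublevel sets $\Phi_\alpha:=\{\mu\in\mathcal{P}(\frX_1):I(\mu)\le\alpha\}$ are compact; this holds in our setting because $I(\mu)=\KL(\mu\|\nu)$ has weakly compact sublevel sets on the Polish space $\mathcal{P}(\frX_1)$. First I would argue that the defining infimum is attained whenever $J(y)<\infty$: a minimizing sequence $\mu_k\in\varGamma^{-1}(y)$ with $I(\mu_k)\to J(y)$ eventually lies in the compact set $\Phi_{J(y)+1}$, so a subsequence converges to some $\mu^\star$; continuity of $\varGamma$ forces $\mu^\star\in\varGamma^{-1}(y)$, and lower semi-continuity of $I$ gives $I(\mu^\star)\le J(y)$, hence $I(\mu^\star)=J(y)$. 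Consequently $\{y:J(y)\le\alpha\}=\varGamma(\Phi_\alpha)$, the continuous image of a compact set, which is compact and in particular closed. This proves lower semi-continuity of $J$ (and in fact that $J$ is itself a good rate function).

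The routine part is the pair of LDP bounds, which drop out immediately once continuity of $\varGamma$ is invoked. The main obstacle I anticipate is the lower semi-continuity of $J$: it genuinely relies on the compactness of the sublevel sets of $I$, which is needed both to prevent minimizing sequences along the fibres from escaping and to guarantee that the contracted infimum is attained. If one only knew $I$ to be lower semi-continuous but not good, this step could fail, so verifying goodness of $I=\KL(\cdot\|\nu)$ in the relevant topology is the delicate point.
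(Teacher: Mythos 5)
Your argument is correct, and it is essentially the standard textbook proof of the contraction principle; note that the paper does not prove this proposition at all but simply imports it from the large-deviations literature (the citation to den Hollander), so there is no in-paper argument to compare against. The two LDP bounds are handled exactly right: continuity of $\varGamma$ makes $\varGamma^{-1}(O)$ open and $\varGamma^{-1}(C)$ closed, and the fibre identity $\inf_{\mu\in\varGamma^{-1}(A)}I(\mu)=\inf_{y\in A}J(y)$ converts the assumed bounds for $\nu_n$ into the stated bounds for $\varGamma(\nu_n)$; this part needs nothing beyond $I$ being a rate function. You are also right to flag that the lower semi-continuity of $J$ is the only genuinely delicate step and that it requires $I$ to be a \emph{good} rate function, an hypothesis the proposition as stated omits. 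Your identification $\{y:J(y)\le\alpha\}=\varGamma(\Phi_\alpha)$ via attainment of the infimum on compact sublevel sets is the standard way to close this, and in the paper's actual application the gap is harmless for an even simpler reason than the one you give: $\scrP(\frH)$ is compact under $d_{\mathtt{FS}}$ (Proposition~\ref{prop:fubinistudy}), so $\mathcal{P}(\scrP(\frH))$ is itself weakly compact, and any lower semi-continuous rate function on it automatically has compact sublevel sets. So the proposal is sound as written, with goodness of $I$ understood as a standing hypothesis.
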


\subsection{Quantum relative entropies}

\begin{definition}[Realization of states]
    Let $\rho \in \scrS(\frH)$ be a state. A measure $\mu\in\calP(\scrP(\frH))$ is said to \textit{realize} $\rho$ if
    \[
        \rho = \Lambda(\mu):= \int_{\scrP(\frH)} |\psi\rangle\langle\psi| \,\mu(\di\psi),
    \]
    where $\Lambda:\mathcal{P}(\mathscr{P}(\frH))\rightarrow\scrS(\frH)$. 
\end{definition}

\begin{definition}[Common basis]
\label{def:commonbasis}
    Let $\rho,\sigma\in\scrS(\frH)$ be two states. A (possibly non-orthogonal) basis $\{\psi_j\}_{j=1}^n$ of $\frH$ is said to be a \emph{common basis} for the pair $(\rho,\sigma)$ if there are
    coefficients $\rho_i, \sigma_i\in[0,1]$ for $i \in \{1,...,n\}$ such that
    \begin{equation}
        \rho = \sum_{i=1}^n \rho_i |\psi_i\rangle\langle\psi_i|, \quad \sum_{i=1}^n \rho_i = 1,\qquad
        \sigma = \sum_{i=1}^n \sigma_i |\psi_i\rangle\langle\psi_i|, \quad \sum_{i=1}^n \sigma_i = 1.
    \end{equation}
      In this case, one can associate measures $\mu_{\CB} := \sum_j \rho_j\delta_{\psi_j}$ and $\nu_{\CB} := \sum_j \sigma_j\delta_{\psi_j}$, that realize $\rho$ and $\sigma$, such that
\begin{equation}
    \KL(\mu_{\CB}||\nu_{\CB})=\sum_{j=1}^n\rho_j\log\left(\frac{\rho_j}{\sigma_j}\right).
\end{equation}

We denote the set of all common bases for two states $\rho, \sigma \in \scrS_+(\frH)$ by $\CB(\rho, \sigma)$, shortened to $\CB$.
\end{definition}

\begin{definition}[Quantum relative entropy]
    Let $\rho,\sigma$ be states in $\scrS(\frH)$. Two versions of the quantum relative entropy \cite{Umegaki1962ConditionalEI,belavkinstaszewski} are given by
    \begin{equation}
    \begin{aligned}
        \Um(\rho\|\sigma) &:=\begin{cases}
        \Tr[\rho \log (\rho)-\rho\log(\sigma)] &\text{if } \supp(\rho) \subseteq \supp(\sigma),\\
        +\infty & \text{else}.\end{cases} \,\,\,&\text{(Umegaki)}\\
        \BS(\rho\|\sigma) &:=\begin{cases}
        \Tr[\rho \log (\sqrt{\rho} \sigma^{-1} \sqrt{\rho})] &\,\,\,\,\,\text{if } \supp(\rho) \subseteq \supp(\sigma),\\
        +\infty & \,\,\,\,\,\text{else}.\end{cases}\,\,\,\,&\text{(Belavkin-Staszewski)}
    \end{aligned}
    \end{equation}
    Note the resemblance between the quantum relative entropy and KL divergence. In fact, when $\rho$ and $\sigma$ commute, we have that $\Um=\BS$ and they are exactly the KL divergence $\KL$ between the eigenvalues on the shared eigenbasis. In Ref.~\cite{Hiai1991TheProbability}, it is proven that $\Um\leq \BS.$ 
\end{definition}

The non-commutative nature of quantum information allows for different quantum $f$-divergences for a given operator-convex $f$. For example, for $f(x)=x\log(x)$, we find both the Umegaki and Belavkin-Staszewski relative entropies. The Belavkin-Staszewski relative entropy is part of a larger set of \emph{maximal (quantum) $f$-divergences}. This family of functionals was named by Matsumoto in Ref.~\cite{Matsumoto2018AF-Divergenceb}, and it satisfies the standard axioms for quantum divergences. On the other hand, the Umegaki relative entropy is typically considered a \emph{standard (quantum) $f$-divergence} \cite{ReviewHiaiMosonyi}. These (quantum) relative entropies and $f$-divergences are lower semi-continuous. In Refs.~\cite{Hayden2004} and~\cite{strengtheneddpi}, it is shown that both the Umegaki and Belavkin-Staszewski relative entropies satisfy the quantum equivalent of the Data Processing Inequality.

\begin{proposition}[Quantum DPI]
\label{prop:quantumdpi}
    For any completely positive and trace-preserving (CPTP) map $\Phi:\scrS(\frH)\to \widetilde{\scrS}(\frH)$,
    \begin{equation}
        D_j(\Phi(\rho)\|\Phi(\sigma))\leq D_j(\rho\|\sigma),\qquad\text{for every $\rho,\sigma\in\scrS(\frH)$},\qquad j\in\{\mathtt{U},\mathtt{BS}\}.
    \end{equation}
\end{proposition}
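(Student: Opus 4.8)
My proof proposal proceeds by reducing the claim for a general CPTP map to three elementary channels via the Stinespring dilation, and then exploiting that $f(x)=x\log x$ is operator convex, the structural property underlying both $\Um$ and $\BS$. Concretely, every CPTP map $\Phi:\scrS(\frH)\to\widetilde{\scrS}(\frH)$ can be written as $\Phi(\rho)=\Tr_E\bigl[U(\rho\otimes\ket{0}\bra{0}_E)U^\dagger\bigr]$ for a suitable ancilla $E$, fixed pure state $\ket{0}\bra{0}_E$, and unitary $U$. Thus $\Phi$ factors as (i) adjoining a fixed ancilla $\tau$, $\rho\mapsto\rho\otimes\tau$; (ii) unitary conjugation, $\cdot\mapsto U(\cdot)U^\dagger$; and (iii) the partial trace $\Tr_E$. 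Since a composition of monotone maps is monotone, it suffices to verify the inequality for each factor separately, for both $j\in\{\mathtt{U},\mathtt{BS}\}$.

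First I would dispatch (i) and (ii), which both hold with equality. For unitary conjugation, $\log(UXU^\dagger)=U\log(X)U^\dagger$ and $\sqrt{UXU^\dagger}=U\sqrt{X}U^\dagger$, so by cyclicity of the trace both $\Tr[\rho(\log\rho-\log\sigma)]$ and $\Tr[\rho\log(\sqrt{\rho}\,\sigma^{-1}\sqrt{\rho})]$ are invariant. For adjoining a fixed $\tau$ one uses $\sqrt{\rho\otimes\tau}=\sqrt{\rho}\otimes\sqrt{\tau}$, $\log(\rho\otimes\tau)=\log\rho\otimes I+I\otimes\log\tau$, and the corresponding factorization inside $\log(\sqrt{\rho}\,\sigma^{-1}\sqrt{\rho}\otimes I)$ (with inverses understood on supports), which give the additivity $D_j(\rho\otimes\tau\|\sigma\otimes\tau)=D_j(\rho\|\sigma)$.

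The substance of the argument, and the main obstacle, is monotonicity under the partial trace (iii). I would treat both divergences uniformly by a \emph{twirl}: averaging the conjugations $\rho_{AE}\mapsto(I_A\otimes V_k)\rho_{AE}(I_A\otimes V_k)^\dagger$ over a unitary $1$-design $\{V_k\}$ on $E$ yields $\Tr_E[\rho_{AE}]\otimes(I_E/d_E)$. Combining Step~2's unitary invariance and additivity with the \emph{joint convexity} of $(\rho,\sigma)\mapsto D_j(\rho\|\sigma)$ then gives $D_j(\Tr_E[\rho_{AE}]\|\Tr_E[\sigma_{AE}])\le D_j(\rho_{AE}\|\sigma_{AE})$. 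The remaining task is joint convexity, which is where the two divergences draw on different deep inputs: for $\Um$ it is Lieb's concavity theorem (equivalently, concavity of $(\rho,\sigma)\mapsto\Tr[\rho^{s}\sigma^{1-s}]$); for $\BS$ it follows from its identification as the maximal $f$-divergence $\Tr[P_f(\rho,\sigma)]$ for the operator convex $f(x)=x\log x$, whose noncommutative perspective $P_f$ is jointly operator convex. In both cases operator convexity of $x\log x$ is the common engine.

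I expect the partial-trace step to be the only genuine difficulty: once joint convexity is in hand, the reduction is bookkeeping. A secondary point requiring care is the support accounting, namely verifying that $\supp(\Phi(\rho))\subseteq\supp(\Phi(\sigma))$ whenever $\supp(\rho)\subseteq\supp(\sigma)$, so that the finite branch of each definition maps to the finite branch; this holds because a trace-preserving positive map can only shrink supports. As a consistency check, restricting to simultaneously diagonal $\rho,\sigma$ collapses both $\Um$ and $\BS$ to the classical $\KL$, recovering the classical DPI already established, and the inequality $\Um\le\BS$ remains compatible throughout.
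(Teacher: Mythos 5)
The paper does not prove this proposition at all: it is imported as a known result, with the Umegaki case attributed to Ref.~\cite{Hayden2004} and the Belavkin--Staszewski case to Ref.~\cite{strengtheneddpi}, so there is no in-paper argument to compare against. Your reconstruction is the standard Lindblad--Uhlmann route and it is sound: Stinespring factorization into ancilla attachment, unitary conjugation, and partial trace; equality for the first two via additivity and unitary invariance (your caveat about inverses on supports is needed since the ancilla state is pure); and the partial trace handled by a twirl over a unitary $1$-design combined with joint convexity, which for $\Um$ rests on Lieb concavity and for $\BS$ on the joint operator convexity of the noncommutative perspective of the operator-convex $f(x)=x\log x$. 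Your support bookkeeping is also correct --- in finite dimensions $\supp(\rho)\subseteq\supp(\sigma)$ gives $\rho\preceq c\sigma$ for some $c>0$, hence $\Phi(\rho)\preceq c\Phi(\sigma)$ by positivity, so the finite branch maps to the finite branch. Two remarks worth recording: first, for $\BS$ there is a shorter route that bypasses joint convexity entirely, namely Matsumoto's reverse-test characterization of maximal $f$-divergences (a quantity defined as an infimum of classical divergences over preparations is monotone under channels essentially by definition, since composing a preparation of $(\rho,\sigma)$ with $\Phi$ yields a preparation of $(\Phi(\rho),\Phi(\sigma))$); second, one could be tempted to derive the $\BS$ case from Theorem~\ref{thm:maintheorem} of this paper, but that would be circular here, since the quantum DPI is an input to Lemma~\ref{lem:discretization}$(iii)$ and hence to the proof of $(\blacktriangle)$.
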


\section{Unraveling Relative Entropy }
\label{sec:results_marcos}

Stochastic unravelings of master equations, resulting in stochastic Schr\"{o}dinger equations, provide an interpretation of the state as a representation of an ensemble of pure states. This perspective naturally leads to the question whether there is a corresponding interpretation for quantum relative entropy. In this section, we introduce a notion of entropy defined on such ensembles of pure states $\SSE$ as

\begin{definition}[Unravel relative entropy]
    Let $\rho,\sigma\in\scrS(\frH)$ be states. Our proposed \emph{unravel relative entropy} ($\mathtt{UNR}$) is given by 
    \begin{equation}
    \label{eq:sseentropydef}
    \SSE(\rho \|\sigma):=
    \inf_{\mu, \nu \in \calP(\scrP(\frH))} \Bigl\{\KL(\mu \| \nu) \;:\; \rho = \Lambda(\mu),\; \sigma = \Lambda(\nu) \Bigr\},
    \end{equation}
    as the infimum over the KL divergences between measures $\mu$ and $\nu$ on $\scrP(\frH)$ that realize $\rho$ and $\sigma$, respectively.
\end{definition}

\noindent
Our main result is a tighter version of the `minimal reverse test' characterization of \emph{maximal f-divergences} in the spirit of Refs.~\cite{Matsumoto2018AF-Divergenceb,Matsumoto2005ReverseET}. For this purpose, we are concerned with finding probability measures $\mu, \nu$ on pure states that realize two faithful states $\rho, \sigma$ respectively, such that they achieve the infimum in Eq.~\eqref{eq:sseentropydef}. That is, we prove
\begin{theorem}[Entropy equivalence]
\label{thm:maintheorem}
Let $\rho,\sigma\in\scrS_+(\frH)$ be two faithful states. Then there exist probability measures $\mu_{\CB},\nu_{\CB}$ supported on a common (possibly non-orthogonal) basis $\{\psi_j\}_{j=1}^n$, that realize $\rho,\sigma$, respectively, such that
\begin{align}         
\BS(\rho\|\sigma)\overset{(\bigstar)}{=}\KL(\mu_{\CB}\| \nu_{\CB}) \overset{(\blacktriangle)}{=} \SSE(\rho\|\sigma).
\end{align}

\end{theorem}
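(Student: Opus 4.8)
The plan is to prove the two equalities $(\bigstar)$ and $(\blacktriangle)$ separately, with the common basis arising from a simultaneous generalized diagonalization of $\rho$ and $\sigma$. Since $\sigma\succ 0$, I would form the positive operator $T:=\sigma^{-1/2}\rho\sigma^{-1/2}$ and take its spectral decomposition $T=\sum_{i}\lambda_i\,|e_i\rangle\langle e_i|$ in an orthonormal basis $\{e_i\}$, with $\lambda_i>0$ because $\rho$ is faithful. Writing $\rho=\sigma^{1/2}T\sigma^{1/2}$ and $\sigma=\sigma^{1/2}\sigma^{1/2}$ and pushing the eigenvectors through $\sigma^{1/2}$, I would set $\psi_i:=\sigma^{1/2}e_i/\sqrt{c_i}$ with $c_i:=\langle e_i|\sigma|e_i\rangle$, so that $\rho=\sum_i\lambda_i c_i\,|\psi_i\rangle\langle\psi_i|$ and $\sigma=\sum_i c_i\,|\psi_i\rangle\langle\psi_i|$. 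As $\{e_i\}$ is orthonormal and $\sigma^{1/2}$ invertible, $\{\psi_i\}$ is a (generally non-orthogonal) basis, and taking traces shows $\sum_i\lambda_i c_i=\sum_i c_i=1$. This yields the common basis of Definition~\ref{def:commonbasis} with weights $\rho_i=\lambda_i c_i$, $\sigma_i=c_i$, and measures $\mu_{\CB}=\sum_i\rho_i\delta_{\psi_i}$, $\nu_{\CB}=\sum_i\sigma_i\delta_{\psi_i}$.

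For $(\bigstar)$, I would compute both sides explicitly. On the classical side, $\KL(\mu_{\CB}\|\nu_{\CB})=\sum_i\rho_i\log(\rho_i/\sigma_i)=\sum_i\lambda_i c_i\log\lambda_i$. For the quantum side, the clean identity to prove is $\BS(\rho\|\sigma)=\Tr[\sigma\,T\log T]$, which I would obtain from the singular value decomposition of $Q:=\sigma^{-1/2}\rho^{1/2}$: since $T=QQ^\dagger$ and $\rho^{1/2}\sigma^{-1}\rho^{1/2}=Q^\dagger Q$ share the spectrum $\{\lambda_i\}$, writing $Q=\sum_i s_i|e_i\rangle\langle f_i|$ with $\lambda_i=s_i^2$ gives $\rho^{1/2}f_i=s_i\sigma^{1/2}e_i$ and hence $\langle f_i|\rho|f_i\rangle=\lambda_i c_i$. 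Substituting into $\BS(\rho\|\sigma)=\Tr[\rho\log(\rho^{1/2}\sigma^{-1}\rho^{1/2})]=\sum_i\log\lambda_i\,\langle f_i|\rho|f_i\rangle$ then reproduces $\sum_i\lambda_i c_i\log\lambda_i$, establishing $(\bigstar)$.

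For $(\blacktriangle)$, the inequality $\SSE(\rho\|\sigma)\le\KL(\mu_{\CB}\|\nu_{\CB})$ is immediate, as $(\mu_{\CB},\nu_{\CB})$ is admissible in the infimum defining $\SSE$. The substance is the matching lower bound $\KL(\mu\|\nu)\ge\BS(\rho\|\sigma)$ for every admissible pair. I would restrict to $\mu\ll\nu$ (otherwise $\KL=+\infty$) with density $\ell:=\di\mu/\di\nu$, and introduce the positive, unital map $\Phi(g):=\int_{\scrP(\frH)} g(\psi)\,\sigma^{-1/2}|\psi\rangle\langle\psi|\sigma^{-1/2}\,\nu(\di\psi)$, which satisfies $\Phi(1)=\sigma^{-1/2}\sigma\sigma^{-1/2}=I$ and $\Phi(\ell)=\sigma^{-1/2}\rho\sigma^{-1/2}=T$. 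Because $f(x)=x\log x$ is operator convex, the Jensen operator inequality for a unital positive map on a commutative domain gives $f(\Phi(\ell))\preceq\Phi(f(\ell))$; applying $\Tr[\sigma\,\cdot\,]$ and using $\Tr[\sigma\,\sigma^{-1/2}|\psi\rangle\langle\psi|\sigma^{-1/2}]=1$ yields $\BS(\rho\|\sigma)=\Tr[\sigma f(T)]\le\int f(\ell)\,\di\nu=\KL(\mu\|\nu)$. Taking the infimum over admissible pairs gives $\SSE(\rho\|\sigma)\ge\BS(\rho\|\sigma)=\KL(\mu_{\CB}\|\nu_{\CB})$, which together with the trivial bound closes the chain. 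This lower-bound step is a continuous, reverse-test form of the quantum DPI (Proposition~\ref{prop:quantumdpi}) specialized to the averaging map $\Lambda$.

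I expect the main obstacle to be precisely this lower bound for arbitrary, non-finitely supported realizing measures. For finitely supported $\mu,\nu$ it reduces cleanly to a measure-and-prepare CPTP channel followed by the quantum DPI and the commuting case $\BS=\KL$, but the general case requires justifying the operator Jensen inequality for the unital positive map $\Phi$ over the commutative algebra on the compact space $\scrP(\frH)$, together with the maximal-$f$-divergence identity $\BS(\rho\|\sigma)=\Tr[\sigma\,T\log T]$. Care is also needed with the support condition $\supp(\rho)\subseteq\supp(\sigma)$, here automatic since both states are faithful, and with measurability of $\psi\mapsto\sigma^{-1/2}|\psi\rangle\langle\psi|\sigma^{-1/2}$ so that $\Phi$ is well defined.
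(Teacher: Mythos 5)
Your proposal is correct, but it takes a genuinely different route from the paper at two of the three stages. For the common basis, you diagonalize $T=\sigma^{-1/2}\rho\sigma^{-1/2}$ and push the orthonormal eigenvectors through $\sigma^{1/2}$; the paper instead works with eigenvectors of $\rho^{-1}\sigma$ (equivalently the roots of $\det(\rho+\lambda(\sigma-\rho))=0$), runs a $\rho$-weighted Gram--Schmidt inside degenerate eigenspaces, and sets $|\psi_i\rangle=\rho|u_i\rangle/\|\rho|u_i\rangle\|$. Your construction gets linear independence and the dual-basis property for free from orthonormality of $\{e_i\}$, collapsing the paper's Lemmas~\ref{lem:lams}--\ref{lem:basis} into one step; by Lemma~\ref{lem:unique} the two constructions agree up to the usual degeneracy freedom. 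For $(\bigstar)$ your SVD argument for $\BS(\rho\|\sigma)=\Tr[\sigma\,T\log T]$ and the paper's dual-basis computation of $\Tr[\rho\log(\rho\sigma^{-1})]$ are equivalent in content. The real divergence is $(\blacktriangle)$: the paper discretizes $\mu,\nu$ via shrinking partitions, invokes the classical DPI (Markov kernels) to pass to the discrete approximants, applies a measure-and-prepare channel plus the quantum DPI there, and closes with lower semicontinuity of $\BS$; you instead apply the Choi--Davis operator Jensen inequality to the unital positive map $\Phi(g)=\int g(\psi)\,\sigma^{-1/2}|\psi\rangle\langle\psi|\sigma^{-1/2}\,\nu(\di\psi)$ in one shot. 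Your route is shorter, makes the maximal-$f$-divergence/reverse-test structure explicit, and generalizes verbatim to all operator convex $f$ as in Appendix~\ref{app:fdivergences}; the paper's route is more elementary (only the two DPIs and lower semicontinuity) and avoids the one technical point you correctly flag, namely that $\ell=\di\mu/\di\nu$ need only lie in $L^1(\nu)$, so the Jensen step requires a truncation argument (or a remark that positive unital maps on commutative algebras are automatically completely positive and that the inequality survives the monotone limit); if you would rather not supply that, you can splice in the paper's discretization plus classical DPI to reduce to finitely supported measures, where your argument is already complete.
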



\noindent Our unraveling-inspired interpretation of mixed states as statistical ensembles of pure states allows for a constructive proof of this theorem. The result provides a direct connection between the quantum relative entropy of two mixed states and the classical relative entropy of two statistical ensembles of pure states that realize them. It is worth noting that this theorem is directly extendable to $f$-divergences, as detailed in App.~\ref{app:fdivergences}.
In the generalized result, the BS relative entropy is replaced by the corresponding \emph{maximal $f$-divergence}.\\ 

In Sec.~\ref{sec:basisconstruction}, we will detail a method of constructing a common basis. Using a basis from $\CB(\rho,\sigma)$ (see Def.~\ref{def:commonbasis}), it is possible to determine $\mu_{\CB},\nu_{\CB}$ from any pair $\rho,\sigma$. This allows for some exploratory simulations where we randomly sample states in $\scrS_+(\frH)$ according to the Haar random measure \cite{haarmeasure} and calculate the quantities $\BS(\rho\|\sigma), \KL(\mu_{\CB}\| \nu_{\CB})$, and $\Um(\rho\|\sigma)$. The results of this are shown in Fig.~\ref{fig:pdfs}. These results indicate equivalence between $\KL$ on the common basis and $\BS$, but a clear difference with the distribution of $\Um$. Moreover, every single random realization of $\KL$ on the common basis and $\BS$ was found to correspond up to numerical accuracy, providing numerical evidence of the equality given by $(\bigstar)$.

\begin{figure}
    \centering
    \includegraphics[width=0.5\linewidth]{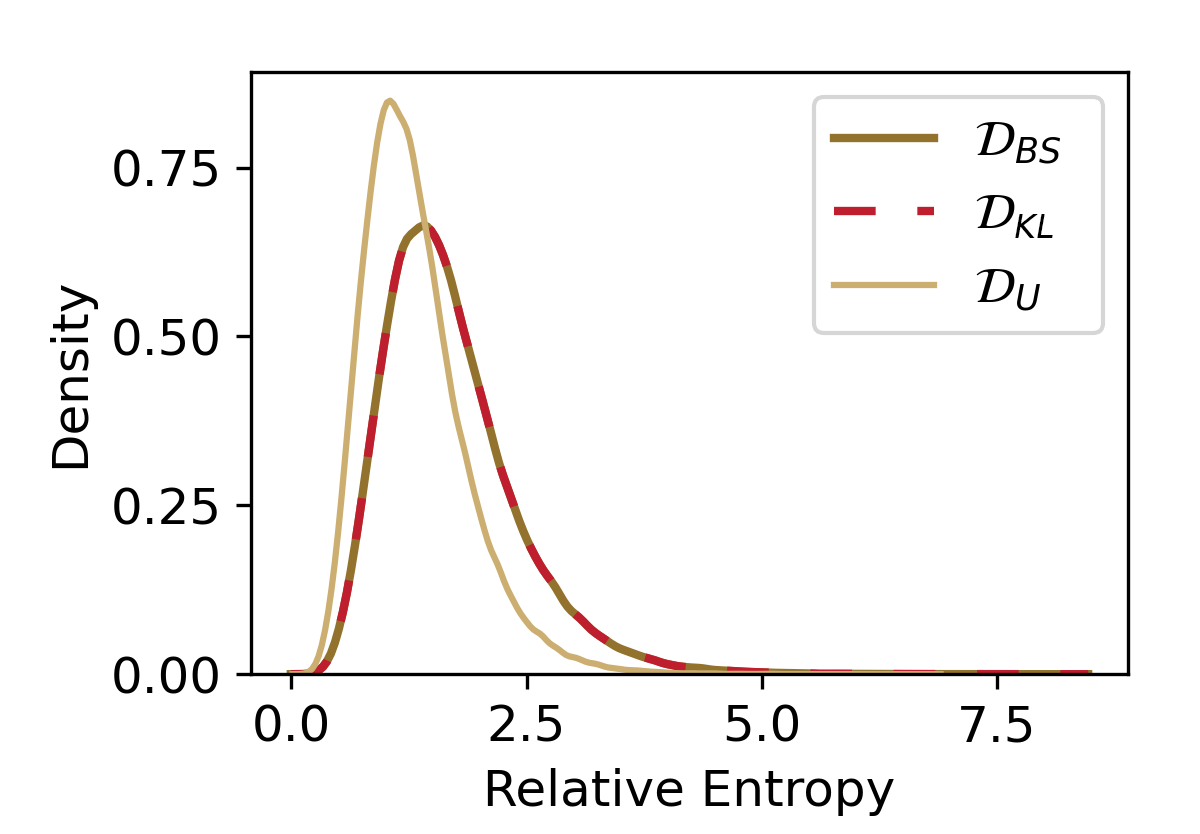}
    \caption{Approximate probability density functions of $\BS(\rho\|\sigma), \KL(\mu_{\CB}\| \nu_{\CB})$, and $\SSE(\rho\|\sigma)$ constructed from $10^5$ random pair of states $(\rho,\sigma)$ sampled according to the Haar measure.}
    \label{fig:pdfs}
\end{figure}

\subsection*{Proof outline of the entropy equivalence theorem}
First, in Sec.~\ref{sec:basisconstruction}, Theorem~\ref{thm: common basis} shows that it is always possible to construct a common basis $\{\psi_i\}_{i=1}^n$ in which to express both $\rho$ and $\sigma$, resulting in measures $\mu_{\CB}$ and $\nu_{\CB}$ respectively. Following, Sec.~\ref{sec:bssseeq} uses the concept of dual bases to prove that $\KL(\mu_{\CB}\|\nu_{\CB})$ realizes the Belavkin-Staszewski quantum relative entropy $\BS(\rho\|\sigma)$, thus proving ($\bigstar$). In Sec.~\ref{sec:klmin}, we use a discretization argument and the classical DPI to show that $\KL(\mu_{\CB}\|\nu_{\CB})$ realizes the infimum of $\SSE(\rho\|\sigma)$, proving equality ($\blacktriangle$).\\

\subsection{Common basis construction}
\label{sec:basisconstruction}

One of the main results necessary for the proof of Theorem~\ref{thm:maintheorem} is the construction of a common basis (Def.~\ref{def:commonbasis}) for the pair of states $(\rho,\sigma)$ in a $n$-dimensional Hilbert space $\frH$. The intuition behind this comes from the two-dimensional case (qubit), where the states are in the interior of the Bloch sphere. The measures $\mu$ and $\nu$ would have the lowest KL divergence (i.e. be least discernible) if their joint measure is concentrated as much as possible. This happens when the two states are expressed as the weighted sum of the two pure states at the intersection of the Bloch sphere and the line connecting $\rho$ and $\sigma$. These two states then amount to a common basis, also used in optimally discriminating Helstrom measurements \cite{han2020helstrom},  see Fig.~\ref{fig:commonbasis}. Theorem~\ref{thm: common basis} extends the existence of a common basis to higher dimensions.

\begin{figure}[b]
    \centering
    \includegraphics[width=0.5\linewidth]{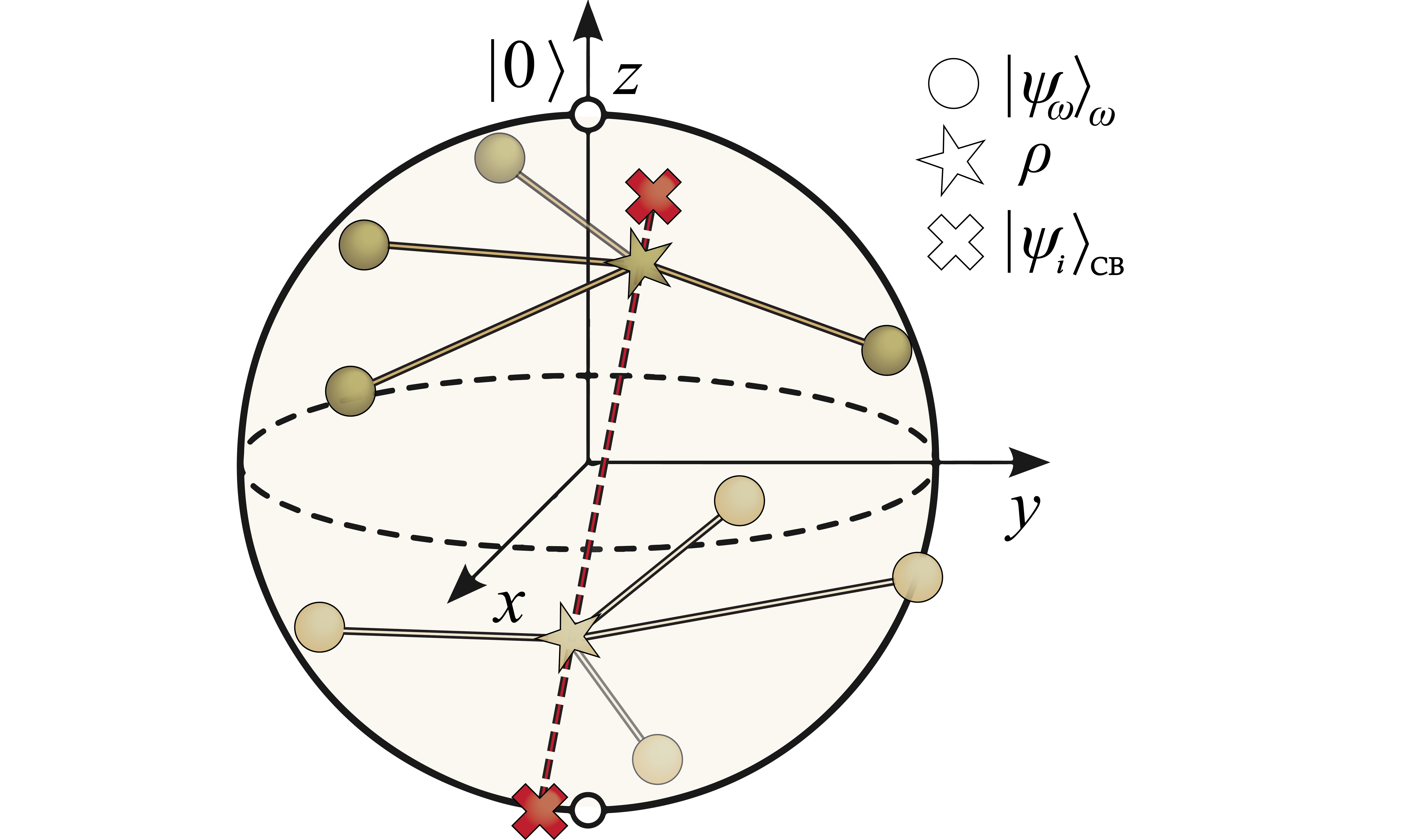}
    \caption{Common basis construction on 1-qubit state. Two ensembles of pure states $\{\psi_\omega\}_\omega$ (circles) encode for states $\rho$ (stars). The unique common basis $\{\psi_{\CB,i}\}_{i=1}^n$ of pure states (red crosses) is constructed by finding the intersection of the Bloch sphere and the line connecting both $\rho$'s (red dashed). Note that the states $\{\psi_{\CB,i}\}_{i=1}^n$ are not orthogonal.}
    \label{fig:commonbasis}
\end{figure}

\begin{theorem}[Existence of common basis]
\label{thm: common basis}
    There exists a common basis for any two faithful states $\rho, \sigma \in \scrS_+(\frH)$, i.e., there is a (possibly non-orthogonal) basis $\{\psi_i\}_{i=1}^n\in\mathtt{CB}(\rho,\sigma)$ and coefficients $\rho_i, \sigma_i \in[0,1]$ for $i \in \{1,...,n\}$ such that
    \begin{equation}
        \rho = \sum_{i=1}^n \rho_i |\psi_i\rangle\langle\psi_i|, \quad \sum_{i=1}^n \rho_i = 1,\qquad
        \sigma = \sum_{i=1}^n \sigma_i |\psi_i\rangle\langle\psi_i|, \quad \sum_{i=1}^n \sigma_i = 1.
    \end{equation}
\end{theorem}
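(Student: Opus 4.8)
The plan is to reduce the \emph{simultaneous} (non-orthogonal) diagonalization of the pair $(\rho,\sigma)$ to an \emph{ordinary} spectral decomposition of a single Hermitian operator. The obstruction to writing both states in a common \emph{orthonormal} basis is precisely that $\rho$ and $\sigma$ need not commute; the statement circumvents this by permitting a non-orthogonal basis. The key observation is that faithfulness of $\sigma$ lets us symmetrize $\rho$ against $\sigma$ by congruence: the operator $A := \sigma^{-1/2}\rho\,\sigma^{-1/2}$ is Hermitian and positive definite, and pulling back its eigenbasis through $\sigma^{1/2}$ produces the desired common basis.

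Concretely, I would proceed as follows. Since $\sigma\in\scrS_+(\frH)$ is faithful, $\sigma^{1/2}$ and $\sigma^{-1/2}$ are well-defined positive definite operators, so $A := \sigma^{-1/2}\rho\,\sigma^{-1/2}$ is Hermitian, and $A\succ 0$ because $\rho\succ 0$. By the spectral theorem there is an orthonormal basis $\{u_i\}_{i=1}^n$ of $\frH$ and eigenvalues $\lambda_i>0$ with $A=\sum_{i=1}^n\lambda_i |u_i\rangle\langle u_i|$ and $\sum_i |u_i\rangle\langle u_i| = I$. I then define the candidate basis vectors $\psi_i := \sigma^{1/2}u_i$. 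Because $\sigma^{1/2}$ is invertible, $\{\psi_i\}_{i=1}^n$ is a basis of $\frH$ (in general non-orthogonal, which is exactly the freedom the theorem grants).

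The verification is a short conjugation computation worth including: conjugating the resolution of the identity gives $\sum_i |\psi_i\rangle\langle\psi_i| = \sigma^{1/2}\bigl(\sum_i|u_i\rangle\langle u_i|\bigr)\sigma^{1/2} = \sigma$, while conjugating the spectral sum gives $\sum_i \lambda_i|\psi_i\rangle\langle\psi_i| = \sigma^{1/2}A\,\sigma^{1/2} = \rho$. Normalizing, I set $\hat\psi_i := \psi_i/\|\psi_i\|$ together with coefficients $\sigma_i := \|\psi_i\|^2 = \langle u_i|\sigma|u_i\rangle$ and $\rho_i := \lambda_i\|\psi_i\|^2$, so that $\rho=\sum_i\rho_i|\hat\psi_i\rangle\langle\hat\psi_i|$ and $\sigma=\sum_i\sigma_i|\hat\psi_i\rangle\langle\hat\psi_i|$ on the same normalized basis. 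Positivity of $\rho_i,\sigma_i$ follows from $\lambda_i>0$ and $\sigma\succ0$, and taking traces yields $\sum_i\sigma_i=\Tr[\sigma]=1$ and $\sum_i\rho_i=\Tr[\rho]=1$, which also forces $\rho_i,\sigma_i\in[0,1]$.

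The only genuinely delicate point — and where faithfulness is indispensable — is the very first step: both the existence of $\sigma^{-1/2}$ (to build $A$) and the invertibility of $\sigma^{1/2}$ (to guarantee that $\{\psi_i\}$ spans $\frH$) require $\sigma\succ0$. Without it the congruence is ill-defined and one would only obtain a spanning set on $\supp(\sigma)$, so the hypothesis $\rho,\sigma\in\scrS_+(\frH)$ cannot be dropped here. I do not expect any other obstacle; once the right operator $A=\sigma^{-1/2}\rho\,\sigma^{-1/2}$ is identified — equivalently, once one passes to the generalized eigenproblem $\rho v=\lambda\sigma v$ — the construction and all verifications are forced.
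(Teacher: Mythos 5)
Your proof is correct, and it reaches the conclusion by a genuinely more economical route than the paper. Both arguments ultimately solve the generalized eigenproblem $\rho v=\lambda\sigma v$, but the paper works directly with the non-Hermitian operator $\rho^{-1}\sigma$: it must first show that this operator has $n$ real eigenvalues and is diagonalizable (Lemmas~\ref{lem:lams} and~\ref{lem:multiplicity}), then run a $\rho$-weighted Gram--Schmidt inside each degenerate eigenspace (Lemma~\ref{lem:gramschmidt}) so that the vectors $\psi_i\propto\rho\,u_i$ form a linearly independent set dual to $\{u_i\}$ (Lemma~\ref{lem:basis}), and finally verifies the ansatz coefficients by testing against that basis. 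You symmetrize first: $A=\sigma^{-1/2}\rho\,\sigma^{-1/2}$ is Hermitian and positive definite, so the spectral theorem hands you an orthonormal eigenbasis with no multiplicity bookkeeping, and the identities $\sum_i|\psi_i\rangle\langle\psi_i|=\sigma^{1/2}I\sigma^{1/2}=\sigma$ and $\sum_i\lambda_i|\psi_i\rangle\langle\psi_i|=\sigma^{1/2}A\,\sigma^{1/2}=\rho$ are one-line conjugations; your normalization and trace arguments at the end are exactly as in the paper. (The paper does use your symmetrization trick, but only inside Lemma~\ref{lem:multiplicity} to establish diagonalizability of $\rho^{-1}\sigma$.) What the paper's longer construction buys is that the vectors $u_i$ it produces are, up to scaling, precisely the dual basis $\psi_i^\perp$ needed later in Lemma~\ref{lem:inverse} and in the proof of $(\bigstar)$, and its explicit eigenvalue analysis feeds the uniqueness statement of Lemma~\ref{lem:unique}; your construction recovers the dual basis just as cheaply, since $\langle\sigma^{-1/2}u_i|\sigma^{1/2}u_j\rangle=\delta_{ij}$ gives $\psi_i^\perp\propto\sigma^{-1/2}u_i$. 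Your remark on where faithfulness is indispensable also matches the paper's use of the hypothesis.
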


\noindent To prove Theorem~\ref{thm: common basis}, we will first prove several lemmas, the proofs of which are deferred to App.~\ref{app:commonbasisproofs}.
\begin{lemma}[Number eigenvalues]\label{lem:lams}
    Let $\rho,\sigma\in \scrS_+(\frH)$ be faithful states. Then, there are exactly $n$ number of $\lambda_i \in \mathbb{R}$ (including multiplicity) such that $\det(\rho + \lambda_i(\sigma - \rho)) = 0$.
\end{lemma}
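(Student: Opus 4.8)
The plan is to exploit the positive-definiteness of $\rho$ to convert the singular-pencil condition into an eigenvalue problem for a single Hermitian matrix. Since $\rho\succ 0$, it admits a positive-definite (hence invertible) Hermitian square root $\rho^{1/2}$, and one can factor
\begin{equation}
\rho + \lambda(\sigma - \rho) = \rho^{1/2}\bigl(I + \lambda A\bigr)\rho^{1/2},\qquad A := \rho^{-1/2}(\sigma - \rho)\rho^{-1/2}.
\end{equation}
Taking determinants and using $\det(\rho)=(\det\rho^{1/2})^2>0$ gives $\det(\rho + \lambda(\sigma-\rho)) = \det(\rho)\,\det(I+\lambda A)$, so the real $\lambda$ annihilating the left-hand side are exactly those annihilating $\det(I+\lambda A)$, independently of the harmless prefactor $\det(\rho)$.

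Next I would record that $A$ is Hermitian, being the congruence $X\mapsto \rho^{-1/2}X\rho^{-1/2}$ of the Hermitian matrix $\sigma-\rho$ (recall $(\rho^{-1/2})^\dagger=\rho^{-1/2}$). Consequently $A$ has $n$ real eigenvalues $a_1,\dots,a_n$ counted with multiplicity, and $\det(I+\lambda A)=\prod_{i=1}^n(1+\lambda a_i)$. Each nonzero factor contributes the root $\lambda_i=-1/a_i$, which is automatically real, and the multiplicity of $\lambda_i$ as a root matches the multiplicity of $a_i$ as an eigenvalue. Thus \emph{every} root of the determinant is real; equivalently, this is the statement that the definite generalized eigenvalue problem $(\sigma-\rho)v=\mu\,\rho v$ — a simultaneous diagonalization of $\sigma-\rho$ with the positive-definite $\rho$ — has only real eigenvalues $\mu_i=a_i$. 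Reality of the roots therefore comes for free from the Hermitian factorization.

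The step I expect to be the main obstacle is the word \emph{exactly}. Producing precisely $n$ roots requires $\lambda\mapsto\det(\rho+\lambda(\sigma-\rho))$ to have degree exactly $n$, i.e. nonvanishing leading coefficient, which equals $\det(\sigma-\rho)=\det(\rho)\prod_i a_i$. This can fail whenever $\sigma-\rho$ is singular and some $a_i=0$; such a zero eigenvalue contributes a factor $1$ rather than a finite root, so only $\operatorname{rank}(\sigma-\rho)$ finite real roots appear. The plan to control this is to split off the zero-eigenspace of $A$: on its orthogonal complement $A$ restricts to an invertible Hermitian operator, so the reduced pencil has degree equal to that complement's dimension and supplies exactly that many real roots, while the directions with $a_i=0$ are precisely those along which $\rho$ and $\sigma$ satisfy $\sigma\psi=\rho\psi$ and thus furnish common-basis vectors directly. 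Pinning down this degree count — reconciling the finite roots $-1/a_i$ against the directions lost to $a_i=0$ — is the only delicate point; once the leading coefficient is shown nonzero (e.g. under the nondegeneracy hypothesis that $\sigma-\rho$ is invertible), both the reality and the full count of $n$ follow immediately from the factorization above.
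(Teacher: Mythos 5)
Your proposal follows essentially the same route as the paper's proof: the paper factors $\det(\rho+\lambda(\sigma-\rho))=\det(\lambda\rho)\det\bigl(\rho^{-1}\sigma-(1-\tfrac{1}{\lambda})I\bigr)$ and proves reality of the eigenvalues of $\rho^{-1}\sigma$ by a quadratic-form argument on eigenvectors, while you conjugate by $\rho^{1/2}$ to obtain the Hermitian matrix $A=\rho^{-1/2}(\sigma-\rho)\rho^{-1/2}$, which is similar to $\rho^{-1}\sigma-I$ (so $a_i=\mu_i-1$ and your roots $-1/a_i$ coincide with the paper's $1/(1-\mu_i)$). The two reality arguments are interchangeable; yours is marginally cleaner since Hermiticity of $A$ is immediate.

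The obstacle you flag at the end is, however, genuine, and the paper's own proof does not resolve it either. The polynomial $\lambda\mapsto\det(\rho+\lambda(\sigma-\rho))$ has leading coefficient $\det(\sigma-\rho)$, so whenever $\sigma-\rho$ is singular (equivalently, $1$ is an eigenvalue of $\rho^{-1}\sigma$, i.e.\ some $a_i=0$) its degree drops below $n$ and the lemma as literally stated fails: for $\rho=\sigma$ the determinant is the nonzero constant $\det(\rho)$ and there are no roots at all. The paper's proof silently identifies roots $\lambda_j$ with eigenvalues $1-1/\lambda_j$ of $\rho^{-1}\sigma$, a correspondence that breaks precisely at the eigenvalue $1$, which would require $\lambda_j=\infty$. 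So your hesitation is well placed rather than a defect relative to the paper: the honest repair is either to assume $\det(\sigma-\rho)\neq 0$, or to handle $\ker(\sigma-\rho)$ separately as you propose — those directions satisfy $\sigma\psi=\rho\psi$ and can be placed directly into the common basis, which is all the downstream construction (Theorem~\ref{thm: common basis}) actually requires.
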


\noindent
We argue that for a given $\lambda_i \in \mathbb{R}$ there are as many linearly independent eigenvectors of $\rho +\lambda_i(\sigma-\rho)$ with eigenvalue $0$ as the algebraic multiplicity of $1-1/\lambda_i$ as an eigenvalue of $\rho^{-1}\sigma$.
\begin{lemma}[Multiplicity]\label{lem:multiplicity}
    For any eigenvalue $1-1/\lambda_i \in \mathbb{R}$ of $\rho^{-1}\sigma$ with algebraic multiplicity $k\leq n$ there are $k$ linearly independent vectors $\{w_{i,j}\}_{j=1}^k$ such that
\begin{equation}
        (\rho + \lambda_i(\sigma-\rho))\ket{w_{i,j}} = 0.
    \end{equation}
\end{lemma}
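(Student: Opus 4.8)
The plan is to convert the kernel condition $(\rho + \lambda_i(\sigma-\rho))\ket{w}=0$ into an honest eigenvalue equation for the matrix $\rho^{-1}\sigma$, and then to leverage diagonalizability of $\rho^{-1}\sigma$ so that the geometric multiplicity I can read off from the kernel actually coincides with the algebraic multiplicity $k$ quoted in the statement.

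First I would record that, since $\rho\in\scrS_+(\frH)$ is faithful, $\rho$ is invertible and admits a positive-definite square root $\rho^{1/2}$. Evaluating the determinant in Lemma~\ref{lem:lams} at $\lambda=0$ gives $\det\rho\neq 0$, so every root $\lambda_i$ produced there satisfies $\lambda_i\neq 0$, and the quantity $1-1/\lambda_i$ is well defined. The key algebraic identity is the factorization
\[
\rho + \lambda_i(\sigma-\rho) = (1-\lambda_i)\rho+\lambda_i\sigma = \lambda_i\,\rho\bigl(\rho^{-1}\sigma - (1-1/\lambda_i)I\bigr),
\]
which I would verify by direct expansion. Because $\lambda_i\neq 0$ and $\rho$ is invertible, the left factor $\lambda_i\rho$ is a bijection, so
\[
(\rho + \lambda_i(\sigma-\rho))\ket{w}=0 \iff \rho^{-1}\sigma\,\ket{w} = (1-1/\lambda_i)\ket{w}.
\]
Hence the null space of $\rho+\lambda_i(\sigma-\rho)$ equals, as a set, the eigenspace of $\rho^{-1}\sigma$ for the eigenvalue $1-1/\lambda_i$, and the number of linearly independent solutions I want is exactly the dimension of that eigenspace, i.e.\ the geometric multiplicity.

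The crux is then to upgrade this geometric multiplicity to the algebraic multiplicity $k$. For this I would use the similarity
\[
\rho^{-1}\sigma = \rho^{-1/2}\bigl(\rho^{-1/2}\sigma\rho^{-1/2}\bigr)\rho^{1/2},
\]
where the conjugated operator $\rho^{-1/2}\sigma\rho^{-1/2}$ is Hermitian (in fact positive definite, since $\sigma\succ 0$). By the spectral theorem it is diagonalizable, so its algebraic and geometric multiplicities agree for every eigenvalue; and since similarity preserves the full Jordan structure, $\rho^{-1}\sigma$ is diagonalizable as well and its eigenvalue $1-1/\lambda_i$ has geometric multiplicity equal to its algebraic multiplicity $k$. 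Combining this with the equivalence above furnishes exactly $k$ linearly independent vectors $\{\ket{w_{i,j}}\}_{j=1}^k$ lying in the null space of $\rho+\lambda_i(\sigma-\rho)$, which is the claim.

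I expect the main obstacle to be precisely this algebraic-versus-geometric multiplicity gap: a priori the kernel only controls the geometric multiplicity, which for a general (non-normal) matrix such as $\rho^{-1}\sigma$ could be strictly smaller than the algebraic one. Everything hinges on the observation that, although $\rho^{-1}\sigma$ need not be Hermitian, it is \emph{similar} to a Hermitian operator and therefore still diagonalizable. A secondary point I would treat carefully is the bijectivity and monotonicity of the map $\lambda_i\mapsto 1-1/\lambda_i$ on the admissible roots, so that distinct roots from Lemma~\ref{lem:lams} correspond to distinct eigenvalues of $\rho^{-1}\sigma$ and the multiplicity bookkeeping remains consistent.
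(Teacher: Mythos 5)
Your proposal is correct and follows essentially the same route as the paper: converting the kernel condition into the eigenvalue equation for $\rho^{-1}\sigma$, then using the similarity $\rho^{-1}\sigma = \rho^{-1/2}\bigl(\rho^{-1/2}\sigma\rho^{-1/2}\bigr)\rho^{1/2}$ to a Hermitian operator to conclude diagonalizability and hence equality of algebraic and geometric multiplicities. Your extra care about $\lambda_i\neq 0$ and the well-definedness of $1-1/\lambda_i$ is a harmless refinement of the same argument.
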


By Lemmas~\ref{lem:lams} and \ref{lem:multiplicity}, it follows that there are $n$ eigenvectors $v_i\in\frH$ of $\rho^{-1}\sigma$ with eigenvalues $1-1/\lambda_i$ for $i=1,...,n$, such that
\begin{equation}
(\rho+\lambda_i(\sigma-\rho))|v_i\rangle = \rho |v_i\rangle + \lambda_i(\sigma-\rho)|v_i\rangle=0.
\end{equation}

Before defining the pure states that will compose the common basis, a scheme is proposed to ensure that the chosen representatives of the eigenspaces of $\rho^{-1}\sigma$ also function as a dual basis of the common basis.
\begin{lemma}[Gram-Schmidt]\label{lem:gramschmidt}
    Let $\{w_{i,j}\}_j$ be a normalized basis of an eigenspace of $\rho^{-1}\sigma$ with eigenvalue $1-1/\lambda_i$. Then a normalized basis $\{u_{i,j}\}_j$ of the same eigenspace can be constructed for which $\braket{u_{i,j}|\rho|u_{i,l}}=0$ for $j\neq l$. 
\end{lemma}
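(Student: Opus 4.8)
The plan is to read the target identity $\braket{u_{i,j}|\rho|u_{i,l}}=0$ as an orthogonality statement for the $\rho$-weighted form, and then to run the ordinary Gram--Schmidt procedure against that form. Concretely, on $\frH$ I would introduce the sesquilinear form
\begin{equation}
\langle \phi, \psi\rangle_\rho := \braket{\phi|\rho|\psi}.
\end{equation}
Since $\rho\in\scrS_+(\frH)$ is faithful, $\rho$ is Hermitian and strictly positive, so this form is Hermitian-symmetric and positive definite: $\langle\psi,\psi\rangle_\rho=\braket{\psi|\rho|\psi}>0$ for every $\psi\neq 0$. Hence $\langle\cdot,\cdot\rangle_\rho$ is a genuine inner product on $\frH$, and its restriction to the eigenspace $V_i$ of $\rho^{-1}\sigma$ associated with the eigenvalue $1-1/\lambda_i$ is again an inner product on $V_i$.

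First I would orthogonalize the given basis $\{w_{i,j}\}_j$ of $V_i$ with respect to $\langle\cdot,\cdot\rangle_\rho$. Setting $\tilde u_{i,1}:=w_{i,1}$ and, inductively,
\begin{equation}
\tilde u_{i,j} := w_{i,j} - \sum_{l<j} \frac{\langle \tilde u_{i,l}, w_{i,j}\rangle_\rho}{\langle \tilde u_{i,l}, \tilde u_{i,l}\rangle_\rho}\, \tilde u_{i,l},
\end{equation}
a routine induction gives $\langle \tilde u_{i,j},\tilde u_{i,l}\rangle_\rho=\braket{\tilde u_{i,j}|\rho|\tilde u_{i,l}}=0$ for $j\neq l$. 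Each $\tilde u_{i,j}$ is a linear combination of $\{w_{i,l}\}_{l\le j}$, so the family stays inside $V_i$; moreover the transition matrix from $\{w_{i,j}\}_j$ to $\{\tilde u_{i,j}\}_j$ is upper-triangular with unit diagonal, hence invertible, and $\{\tilde u_{i,j}\}_j$ is again a basis of $V_i$. The denominators never vanish, since linear independence makes each $\tilde u_{i,l}$ nonzero and positive definiteness then forces $\langle \tilde u_{i,l},\tilde u_{i,l}\rangle_\rho>0$.

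Finally I would normalize in the ambient Hilbert-space norm, setting $u_{i,j}:=\tilde u_{i,j}/\|\tilde u_{i,j}\|_\frH$, so that $\{u_{i,j}\}_j$ is a normalized basis of $V_i$. Because each $u_{i,j}$ differs from $\tilde u_{i,j}$ only by a positive scalar, the off-diagonal relations $\braket{u_{i,j}|\rho|u_{i,l}}=0$ for $j\neq l$ are preserved, which is exactly the claim. The only genuinely load-bearing point---and thus the main, if mild, obstacle---is establishing that $\langle\cdot,\cdot\rangle_\rho$ is positive definite so that Gram--Schmidt is well defined; this is precisely where faithfulness $\rho\succ 0$ enters, and once it is in place the construction reduces verbatim to the classical Gram--Schmidt theorem.
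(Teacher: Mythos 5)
Your proposal is correct and is essentially the paper's own argument: the paper likewise runs Gram--Schmidt against the $\rho$-weighted form $\braket{\cdot|\rho|\cdot}$ on the eigenspace and then normalizes in the ambient norm, with faithfulness of $\rho$ guaranteeing the nonvanishing denominators. Your write-up is in fact slightly more careful in making the positive-definiteness of the form and the preservation of the eigenspace explicit.
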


\begin{lemma}[Linear independence]
\label{lem:basis}
Take $\{u_i\}_{i=1}^n$ to be the eigenvectors of $\rho^{-1}\sigma$ where the eigenvectors for eigenvalues with algebraic multiplicity $k<n$ are obtained following according to Lemma~\ref{lem:gramschmidt}. Define the pure states $\{\psi_i\}_{i=1}^n\subset \scrP(\frH)$ in the following way
\begin{align}
    |\psi_i\rangle := \frac{\rho |u_i\rangle}{\| \rho |u_i\rangle\|} \quad \forall \, i \in \{1,...,n\}.
\end{align}
Then, $\{\psi_i\}_{i=1}^n\subset\scrP(\frH)$ is a linearly independent set and $\braket{\psi_j|u_i} = 0$ for all $i \neq j \in \{1,...,n\}$, thus $\psi_i^\perp = u_i$.
\end{lemma}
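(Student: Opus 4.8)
The plan is to reduce both assertions of the lemma to a single statement, namely that the chosen eigenvectors $\{u_i\}_{i=1}^n$ are mutually orthogonal with respect to the inner product induced by $\rho$. Since $\rho$ is Hermitian and $\ket{\psi_j}$ is a positive multiple of $\rho\ket{u_j}$, one has $\braket{\psi_j|u_i}\propto\braket{u_j|\rho|u_i}$, so the second claim $\braket{\psi_j|u_i}=0$ for $i\neq j$ is \emph{equivalent} to showing $\braket{u_j|\rho|u_i}=0$ whenever $i\neq j$. Once this $\rho$-biorthogonality is in hand, linear independence will follow at once, because $\braket{u_j|\rho|u_j}>0$ by positive definiteness of $\rho$. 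Thus the whole lemma hinges on establishing the single relation $\braket{u_j|\rho|u_i}=\delta_{ij}\braket{u_j|\rho|u_j}$.

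To prove the biorthogonality I would first recast the spectral data as a generalized eigenvalue problem: by construction each $u_i$ satisfies $(\rho+\lambda_i(\sigma-\rho))\ket{u_i}=0$, equivalently $\sigma\ket{u_i}=\mu_i\,\rho\ket{u_i}$ with $\mu_i:=1-1/\lambda_i$, and by \refLemma{lem:lams} all $\lambda_i$, hence all $\mu_i$, are real. Fixing $i\neq j$, I would split into two cases. If $\mu_i\neq\mu_j$ (different eigenspaces of $\rho^{-1}\sigma$), then using Hermiticity of $\sigma,\rho$ and reality of the $\mu$'s,
\[
\mu_i\braket{u_j|\rho|u_i}=\braket{u_j|\sigma|u_i}=\mu_j\braket{u_j|\rho|u_i},
\]
so $(\mu_i-\mu_j)\braket{u_j|\rho|u_i}=0$ forces $\braket{u_j|\rho|u_i}=0$. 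If instead $\mu_i=\mu_j$ (the vectors lie in the same eigenspace), then $\braket{u_j|\rho|u_i}=0$ is precisely the property secured by the Gram--Schmidt procedure of \refLemma{lem:gramschmidt}, which was applied exactly to $\rho$-orthogonalize the representatives within each degenerate eigenspace. Combining the two cases yields $\braket{u_j|\rho|u_i}=0$ for all $i\neq j$, hence $\braket{\psi_j|u_i}=0$, proving the orthogonality claim and identifying $u_i$ (up to normalization) as the dual vector $\psi_i^\perp$.

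For linear independence I would suppose $\sum_i c_i\ket{\psi_i}=0$ and pair with $\bra{u_j}$: by the biorthogonality just shown, every term with $i\neq j$ vanishes, leaving $c_j\braket{u_j|\psi_j}=0$; since $\braket{u_j|\psi_j}\propto\braket{u_j|\rho|u_j}>0$ (positive definiteness of $\rho$ and $u_j\neq 0$), each $c_j=0$, so $\{\psi_i\}_{i=1}^n$ is linearly independent and, being $n$ vectors in the $n$-dimensional space $\frH$, a basis. The main obstacle is not the distinct-eigenvalue case, which is the standard generalized-eigenvalue argument, but the degenerate eigenspaces of the \emph{non-Hermitian} operator $\rho^{-1}\sigma$, whose eigenvectors need not be orthogonal in any sense a priori; this is exactly the role of \refLemma{lem:gramschmidt}, which supplies a $\rho$-orthogonal choice of representatives within each eigenspace and thereby upgrades the cross-eigenspace orthogonality to global biorthogonality.
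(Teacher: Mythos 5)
Your proposal is correct and follows essentially the same route as the paper: both reduce the dual-basis claim to $\braket{u_j|\rho|u_i}=0$ for $i\neq j$, split into the distinct-eigenvalue case (handled by the generalized eigenvalue relation and Hermiticity of $\rho$ and $\sigma$) and the degenerate case (handled by Lemma~\ref{lem:gramschmidt}), and then deduce linear independence by pairing a vanishing linear combination against the biorthogonal family using $\braket{u_j|\rho|u_j}>0$. The only cosmetic difference is that you argue independence of $\{\psi_i\}$ directly while the paper argues independence of $\{u_i\}$ by contradiction; the two are equivalent since $\rho$ is invertible.
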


\noindent We now continue to use Lemmas~\ref{lem:lams}-\ref{lem:basis} to construct a common basis for two faithful states, and thus prove Theorem~\ref{thm: common basis}.

\begin{proof}[Proof of Theorem~\ref{thm: common basis}]
Let $\rho, \sigma$ be faithful states in a $n$-dimensional Hilbert space $\frH$. By Lemmas~\ref{lem:lams} and \ref{lem:multiplicity}, there are $n$ eigenpairs $(\lambda_i,v_i)$ of $\rho^{-1}\sigma$ such that $(\rho+\lambda_i(\sigma-\rho))\ket{v_i}=0$. To account for multiplicity, use the scheme presented in Lemma \ref{lem:gramschmidt} to obtain a new set of $n$ vectors $\{u_i\}_i$ and define the pure states $\ket{\psi_i}$, according to Lemma~\ref{lem:basis}.\\

Now consider the ansatzes
\begin{equation}
\begin{aligned}
    &\hat{\rho} =\sum_{i=1} ^n \rho_i |\psi_i\rangle\langle\psi_i|\,,\qquad \,\rho_i:=\frac{\| \rho |u_i\rangle\|^2}{\braket{u_i|\rho|u_i}}\,,\\
    &\hat{\sigma} =\sum_{i=1} ^n \sigma_i |\psi_i\rangle\langle\psi_i|\,,\qquad \sigma_i:=\frac{\| \rho |u_i\rangle\|^2}{\braket{u_i|\rho|u_i}^2}\braket{u_i|\sigma|u_i} = \rho_i \frac{\braket{u_i|\sigma|u_i}}{\braket{u_i|\rho|u_i}}. 
\end{aligned}
\end{equation}
    We claim that $\rho = \hat{\rho}$, $\sigma = \hat{\sigma}$. According to Lemma~\ref{lem:basis}, $\{u_i\}_{i=1}^n$ forms a basis. Hence, it suffices to show that $\rho \ket{u_k} = \hat{\rho}\ket{u_k}$ and $\sigma \ket{u_k} = \hat{\sigma}\ket{u_k}$ for all $k \in \{1,...,n\}$. Using Lemma~\ref{lem:basis}, it follows that
\begin{equation}
        \hat{\rho}|u_k\rangle = \sum_{i=1} ^n \frac{\| \rho |u_i\rangle\|^2}{\braket{u_i|\rho|u_i}} |\psi_i\rangle\langle\psi_i|u_k\rangle = \frac{\| \rho |u_k\rangle\|^2}{\braket{u_k|\rho|u_k}} |\psi_k\rangle\langle\psi_k|u_k\rangle = \frac{\| \rho |u_k\rangle\|^2}{\braket{u_k|\rho|u_k}} \frac{\rho|u_k\rangle}{\| \rho |u_k\rangle\|} \frac{\braket{u_k|\rho|u_k}}{\| \rho |u_k\rangle\|} = \rho |u_k\rangle.
    \end{equation}
    Similarly, using $\rho |u_k\rangle = -\lambda_k (\sigma - \rho) |u_k\rangle$, we arrive at
\begin{equation}
        \hat{\sigma}|u_k\rangle = \sum_{i=1} ^n \frac{\| \rho |u_i\rangle\|^2}{\braket{u_i|\rho|u_i}^2}\braket{u_i|\sigma|u_i} |\psi_i\rangle\langle\psi_i|u_k\rangle =\frac{\braket{u_k|\sigma|u_k}}{\braket{u_k|\rho|u_k}} \rho |u_k\rangle=\sigma|u_k\rangle.
    \end{equation}
    The vectors $(\rho_i)_i, (\sigma_i)_i\in \R^n$ are probability vectors as, $\rho_i,\sigma_i\ge 0$ for every $i\in\{1,\ldots,n\}$ and
\begin{equation}
        1 = \Tr[\rho]  = \Tr\left[\sum_{i=1} ^n \rho_i\ket{\psi_i}\bra{\psi_i} \right]=\sum_{i=1} ^n \rho_i,
    \end{equation}
    and similarly for $(\sigma_i)_i$. Thus, $\{\psi_i\}_{i=1}^n$ is a common basis of $\rho$ and $\sigma$.
\end{proof}


\noindent The following lemma shows that the found basis must be a linearly independent set and is as small as possible.
\begin{lemma}[Support]\label{lem:minimal}
    Let $\rho \in \scrS_+(\frH)$ be a faithful state. Then any probability measure that realizes $\rho$ must have support on at least $n$ linearly independent pure states.
\end{lemma}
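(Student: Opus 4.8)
The plan is to argue by contradiction, exploiting the fact that a faithful state is strictly positive definite. Suppose $\mu\in\calP(\scrP(\frH))$ realizes $\rho$ but its support $\supp(\mu)$ does \emph{not} contain $n$ linearly independent pure states. Since any spanning set of an $n$-dimensional space would contain $n$ linearly independent vectors, this means the closed linear span $W:=\overline{\mathrm{span}}\,\supp(\mu)$ is a \emph{proper} subspace of $\frH$. As $\dim\frH=n<\infty$, the orthogonal complement $W^\perp$ is then nontrivial, so I may fix some nonzero $\phi\in W^\perp$.

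The key step is to evaluate the quadratic form of $\rho$ on this $\phi$. Using the realization $\rho=\Lambda(\mu)=\int_{\scrP(\frH)}|\psi\rangle\langle\psi|\,\mu(\di\psi)$ together with the (unproblematic, finite-dimensional) interchange of the inner product and the integral, I obtain $\langle\phi|\rho|\phi\rangle=\int_{\scrP(\frH)}|\langle\phi|\psi\rangle|^2\,\mu(\di\psi)$. Because $\phi\perp W$ and $\supp(\mu)\subseteq W$, the integrand $|\langle\phi|\psi\rangle|^2$ vanishes at every $\psi\in\supp(\mu)$, hence $\mu$-almost everywhere, so the integral is zero. This directly contradicts faithfulness: $\rho\in\scrS_+(\frH)$ means $\rho\succ 0$, whence $\langle\phi|\rho|\phi\rangle>0$ for every nonzero $\phi$. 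Therefore no such $\phi$ exists, forcing $W=\frH$, i.e.\ $\supp(\mu)$ spans $\frH$ and thus contains at least $n$ linearly independent pure states.

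The only genuinely delicate point is measure-theoretic rather than algebraic: one must justify that $|\langle\phi|\psi\rangle|^2=0$ holds $\mu$-almost everywhere, not merely on a set one hopes is large. This is exactly where the \emph{topological} notion of support is used, since $\supp(\mu)$ carries full measure by definition, and $\phi$ annihilates every vector in its closed span. Once this is in place, the remainder is elementary linear algebra. I would present the argument in contrapositive form as above, and it is worth noting that the same computation shows more generally that the range of $\rho$ coincides with the closed span of $\supp(\mu)$, so that the faithful case is precisely the statement that a full-rank state cannot be realized by an ensemble confined to any proper subspace.
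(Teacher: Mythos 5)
Your proof is correct and rests on the same core idea as the paper's: assume the support fails to span, produce a nonzero vector $\phi$ orthogonal to everything in the support, and conclude $\langle\phi|\rho|\phi\rangle=0$, contradicting faithfulness. The one genuine difference is in generality and rigor. The paper's proof immediately writes the measure as a finite convex combination $\rho=\sum_{i=1}^m\rho_i|\phi_i\rangle\langle\phi_i|$ with $m<n$, i.e.\ it implicitly treats only discretely (finitely) supported measures, even though the lemma is stated for arbitrary $\mu\in\calP(\scrP(\frH))$ --- a measure could be continuous yet concentrated on pure states lying in a proper subspace. Your version handles this correctly by working with the closed span $W$ of the topological support, integrating $|\langle\phi|\psi\rangle|^2$ against $\mu$, and invoking the fact that $\supp(\mu)$ carries full measure (valid here since $\scrP(\frH)$ is a compact, hence separable, metric space). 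So your argument is the same in spirit but is the one that actually proves the lemma as stated; the paper's buys brevity at the cost of silently restricting to the discrete case. Your closing observation that $\mathrm{ran}(\rho)$ equals the closed span of $\supp(\mu)$ is also correct and is a nice strengthening consistent with how the lemma is used elsewhere in the paper.
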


\noindent From Lemma~\ref{lem:gramschmidt} it follows that the common basis is not unique when $\rho^{-1}\sigma$ has $m<n$ distinct eigenvalues, since permuting the vectors before applying the scheme returns a different set. The following lemma argues that if all eigenvalues of $\rho^{-1}\sigma$ are distinct, then the common basis is unique up to a permutation.
\begin{lemma}[Uniqueness]\label{lem:unique}
    Let $\rho, \sigma \in \scrS_+(\frH)$ be two faithful states. Let $\{\psi_i\}_{i=1}^n$,  $\{\phi_i\}_{i=1}^n$ be two sets of common bases for the pair $(\rho, \sigma)$. If $\rho^{-1}\sigma$ has $n$ distinct eigenvalues, then there is a permutation $\pi : \{1,...,n\} \to  \{1,...,n\}$ and scalars $c_i \in \mathbb{C}$ with $|c_i|=1$ such that $\ket{\psi_i} = c_i \ket{\phi_{\pi(i)}}$, $i=1,\ldots,n$.
\end{lemma}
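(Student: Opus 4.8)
The plan is to characterize any common basis through its \emph{dual basis} and to show that this dual basis necessarily diagonalizes $\rho^{-1}\sigma$; the uniqueness claim then reduces to the elementary fact that $n$ distinct eigenvalues force every eigenspace to be one-dimensional. As a preliminary remark, since $\rho$ is faithful and $\{\psi_i\}_{i=1}^n$ is a basis, all the weights must be strictly positive: if some $\rho_i=0$ then $\rho=\sum_i\rho_i\ket{\psi_i}\bra{\psi_i}$ would have rank below $n$, contradicting full rank, and likewise every $\sigma_i>0$. Hence $\{\psi_i\}_{i=1}^n$ is a genuine (linearly independent) basis, consistent with Lemma~\ref{lem:minimal}, and it admits a unique dual basis $\{\tilde\psi_i\}_{i=1}^n$ determined by $\braket{\tilde\psi_i|\psi_j}=\delta_{ij}$.

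The key computation is to expand $\rho^{-1}$ and $\rho^{-1}\sigma$ in the basis and its dual. Using the resolution of identity $I=\sum_i\ket{\tilde\psi_i}\bra{\psi_i}$, one checks directly that $\rho^{-1}=\sum_i\rho_i^{-1}\ket{\tilde\psi_i}\bra{\tilde\psi_i}$, and therefore
\begin{equation}
\rho^{-1}\sigma=\sum_{i}\frac{\sigma_i}{\rho_i}\ket{\tilde\psi_i}\bra{\psi_i}.
\end{equation}
Applying this to $\tilde\psi_k$ and using $\braket{\psi_i|\tilde\psi_k}=\delta_{ik}$ gives $\rho^{-1}\sigma\ket{\tilde\psi_k}=(\sigma_k/\rho_k)\ket{\tilde\psi_k}$, so each dual vector $\tilde\psi_k$ is an eigenvector of $\rho^{-1}\sigma$ with eigenvalue $\sigma_k/\rho_k$. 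Because the $\{\tilde\psi_k\}$ are linearly independent, no two can share an eigenvalue without collapsing into a single one-dimensional eigenspace; thus the ratios $\{\sigma_k/\rho_k\}_{k=1}^n$ are distinct and exhaust the $n$ eigenvalues of $\rho^{-1}\sigma$. This also identifies the $\tilde\psi_i$ with the vectors $u_i$ of Lemma~\ref{lem:basis}.

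Finally I would use distinctness to conclude. Given two common bases $\{\psi_i\}$ and $\{\phi_i\}$, their dual bases $\{\tilde\psi_i\}$ and $\{\tilde\phi_i\}$ are both eigenbases of $\rho^{-1}\sigma$; since all eigenvalues are distinct, each eigenspace is one-dimensional, so after relabeling by the permutation $\pi$ matching equal eigenvalues one has $\tilde\psi_i=d_i\,\tilde\phi_{\pi(i)}$ for nonzero scalars $d_i$. The primal vector $\psi_i$ is fixed, up to a scalar, as the unique direction orthogonal to $\{\tilde\psi_j\}_{j\neq i}$; as $\{\tilde\psi_j\}_{j\neq i}$ and $\{\tilde\phi_k\}_{k\neq\pi(i)}$ span the same hyperplane, $\psi_i$ must be proportional to $\phi_{\pi(i)}$, and normalization of the pure states forces $\ket{\psi_i}=c_i\ket{\phi_{\pi(i)}}$ with $|c_i|=1$. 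The main obstacle is the bookkeeping in the key computation — verifying the dual-basis expansions of $\rho^{-1}$ and $\rho^{-1}\sigma$ for a non-orthogonal basis and passing cleanly from uniqueness of the dual basis back to uniqueness of the primal basis — rather than any deep conceptual step, since the distinct-eigenvalue hypothesis does the heavy lifting.
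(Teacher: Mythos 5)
Your proof is correct and follows essentially the same route as the paper's: both identify the dual vectors of any common basis as eigenvectors of $\rho^{-1}\sigma$ (the paper phrases this via the kernels of the pencil $\rho+\lambda_i(\sigma-\rho)$ and the vanishing of exactly one coefficient per $\lambda_i$, you via the explicit expansion $\rho^{-1}\sigma=\sum_i(\sigma_i/\rho_i)\ket{\tilde\psi_i}\bra{\psi_i}$), and then use the one-dimensionality of the eigenspaces under the distinct-eigenvalue hypothesis to force the dual bases --- hence the primal bases --- to agree up to permutation and phase. Your direct diagonalization in the basis/dual-basis pair is a slightly cleaner packaging of the same idea, with no gaps.
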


\noindent In summary, a common basis of $n$ linearly independent pure states for two faithful states has been constructed and shown to be as small as possible and, under specific circumstances, unique up to permutation.

\subsection{Equivalence of $\BS$ and $\SSE$}
\label{sec:bssseeq}
Before proving $(\bigstar)$ of Theorem~\ref{thm:maintheorem}, an auxiliary lemma is proven. From here on, we use the convention that $\braket{\psi^\perp_i | \psi_j} = \delta_{ij}$. Hence, we don't necessarily have normalized dual basis states, i.e., $\braket{\psi^\perp_i | \psi_i ^\perp} \neq 1$.

\begin{lemma}[Inverse state]
\label{lem:inverse}
    Let $\rho\in \scrS_+(\frH)$ be a faithful state. Suppose $\rho$ can be written as a convex combination of $\{|\psi_i\rangle\langle\psi_i|\}_i$ with coefficients $\rho_i\in(0,1]$. Let $\{\psi_i^\perp\}_{i=1}^n$ be the corresponding dual basis of $\{\psi_i\}_{i=1}^n$.
     Then
    \begin{align}
        \rho^{-1} = \sum_{i=1}^n \frac{1}{\rho_i}|\psi_i ^\perp\rangle\langle\psi_i^\perp|.
    \end{align}
\end{lemma}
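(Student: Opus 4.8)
The plan is to verify directly that the operator
\begin{equation}
  A := \sum_{i=1}^n \frac{1}{\rho_i}\ket{\psi_i^\perp}\bra{\psi_i^\perp}
\end{equation}
is the inverse of $\rho$. Since $\rho$ is faithful it is invertible, so it suffices to establish $\rho A = I$; the claim $\rho^{-1}=A$ then follows by multiplying on the left by $\rho^{-1}$. Both $\rho$ and $A$ are manifestly self-adjoint, so no separate verification of $A\rho = I$ is needed.

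The first ingredient is the dual-basis relation in the form required here. The convention $\braket{\psi_i^\perp|\psi_j}=\delta_{ij}$ gives, upon complex conjugation, $\braket{\psi_i|\psi_j^\perp}=\overline{\braket{\psi_j^\perp|\psi_i}}=\delta_{ij}$ as well, since $\delta_{ij}\in\R$. I would then expand the product using $\rho=\sum_i\rho_i\ket{\psi_i}\bra{\psi_i}$ and the definition of $A$:
\begin{equation}
  \rho A=\sum_{i,j}\frac{\rho_i}{\rho_j}\ket{\psi_i}\braket{\psi_i|\psi_j^\perp}\bra{\psi_j^\perp}=\sum_{i,j}\frac{\rho_i}{\rho_j}\delta_{ij}\ket{\psi_i}\bra{\psi_j^\perp}=\sum_{i=1}^n\ket{\psi_i}\bra{\psi_i^\perp}.
\end{equation}
The off-diagonal terms vanish by dual-basis orthogonality, and, crucially, the coefficients $\rho_i/\rho_j$ cancel on the diagonal, leaving a coefficient-free sum.

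It remains to identify $\sum_{i=1}^n\ket{\psi_i}\bra{\psi_i^\perp}$ with the identity, i.e.\ to invoke the completeness (resolution-of-identity) relation for a basis paired with its dual. This is where the hypotheses enter: because the weights $\rho_i\in(0,1]$ are strictly positive and $\rho$ is faithful, the supporting set $\{\psi_i\}_{i=1}^n$ must be linearly independent (cf.\ Lemma~\ref{lem:minimal}), hence a genuine basis of the $n$-dimensional space $\frH$, and its dual basis $\{\psi_i^\perp\}_{i=1}^n$ is well defined. To prove the completeness relation I would take an arbitrary $\ket{v}\in\frH$, expand it as $\ket{v}=\sum_i c_i\ket{\psi_i}$, read off $c_j=\braket{\psi_j^\perp|v}$ from the dual-basis relation, and conclude $\bigl(\sum_i\ket{\psi_i}\bra{\psi_i^\perp}\bigr)\ket{v}=\sum_i c_i\ket{\psi_i}=\ket{v}$, so this operator equals $I$. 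Combined with the displayed computation, this yields $\rho A=I$ and therefore $\rho^{-1}=A$.

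The only genuine subtlety — and the step I would treat most carefully — is the bookkeeping around linear independence and the existence of the dual basis: the statement presupposes that $\{\psi_i\}_{i=1}^n$ is a basis, and one should make explicit that this is forced by the hypotheses (faithfulness of $\rho$ together with strictly positive weights), rather than merely assumed. Everything else is a direct index manipulation, with the cancellation of the $\rho_i/\rho_j$ factors on the diagonal being precisely what produces the bare identity operator.
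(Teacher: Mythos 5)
Your proposal is correct and follows essentially the same route as the paper's proof: both expand $\rho\,\hat\rho^{-1}$ using the dual-basis relation to obtain $\sum_i\ket{\psi_i}\bra{\psi_i^\perp}$ and then identify this operator with the identity by its action on the basis $\{\psi_i\}$. Your extra care about why $\{\psi_i\}$ is forced to be a basis, and the observation that invertibility of $\rho$ makes the one-sided check suffice, are minor refinements of the same argument.
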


Now $(\bigstar)$ of Theorem \ref{thm:maintheorem} can be proven.

\begin{proof}[Proof of Theorem~\ref{thm:maintheorem}]
     Let $\rho, \sigma \in \scrS_+(\frH)$ be two faithful states. Let $\{\psi_j\}_{j=1}^n\in \CB(\rho, \sigma)$ be a common basis and let $\{\psi_j^\perp\}_{j=1}^n$ be the corresponding dual basis, such that $\braket{\psi_i | \psi_j ^\perp}=\delta_{ij}$. Then there are coefficients $\rho_i, \sigma_i$ such that
   \begin{align}
       \rho = \sum_{i=1}^n \rho_i \ket{\psi_i}\bra{\psi_i}, \quad \sigma = \sum_{i=1}^n \sigma_i \ket{\psi_i}\bra{\psi_i}.
   \end{align}
    Let $\mu_{\CB}$, $\nu_{\CB}$ be the measures having support on the common basis that realize $\rho$, $\sigma$, then
    \begin{equation}
    \KL(\mu_{\CB}||\nu_{\CB})=\sum_{i=1}^n\rho_i \log\left(\frac{\rho_i}{\sigma_i}\right).
    \end{equation}
    Note that since $\sqrt{\rho}$ and $\sigma^{-1}$ are Hermitian matrices, there is an invertible matrix $P$ and a diagonal matrix $D$ with the eigenvalues of $\sqrt{\rho} \sigma^{-1} \sqrt{\rho}$ as entries such that
    \begin{equation}
        \rho \sigma^{-1} = \sqrt{\rho}\sqrt{\rho} \sigma^{-1} \sqrt{\rho}\sqrt{\rho}^{-1} = \sqrt{\rho} PD P^{-1}\sqrt{\rho}^{-1}.
    \end{equation}
    It directly follows that $\rho \sigma^{-1}$ is also diagonizable and has the same eigenvalues as $\sqrt{\rho} \sigma^{-1} \sqrt{\rho}$. By functional calculus and the cyclical property of the trace
    \begin{equation}
        \begin{aligned}
                   \BS(\rho \| \sigma) &= \Tr[\rho \log(\sqrt{\rho}\sigma^{-1} \sqrt{\rho})] = \Tr[\rho P \log(D) P^{-1}] = \Tr[\rho \sqrt{\rho} P\log(D) P^{-1} \sqrt{\rho}^{-1}] =\Tr[\rho \log(\rho \sigma^{-1})].
        \end{aligned}
    \end{equation}
    Using Lemma \ref{lem:inverse}, one deduces
    \begin{align}
        \rho \sigma^{-1} = \sum_{i=1}^n \sum_{j=1}^n \frac{\rho_i}{\sigma_j} \ket{\psi_i} \braket{\psi_i|\psi_j ^\perp} \bra{\psi_j^\perp} = \sum_{i=1}^n \frac{\rho_i}{\sigma_i} \ket{\psi_i}\bra{\psi_i^\perp} .
    \end{align}
    From the representation above, it directly follows that the eigenvectors of $\rho \sigma^{-1}$ are $\ket{\psi_i}$ with eigenvalues $\rho_i/\sigma_i$ for $i \in \{1,...,n\}$. Hence, one can directly compute
    \begin{align}
        \log(\rho \sigma^{-1}) =
         \sum_{i=1}^n \log\left(\frac{\rho_i}{\sigma_i}\right) \ket{\psi_i}\bra{\psi_i^\perp}.
    \end{align}
    Finally, we obtain
    \begin{align*}
        \BS(\rho\|\sigma)=&\Tr[\rho\log(\sqrt{\rho}\sigma^{-1}\sqrt{\rho})]=\Tr\left[ \left(\sum_{j=1}^n \rho_j \ket{\psi_j}\bra{\psi_j} \right) \left(\sum_{i=1}^n \log\left(\frac{\rho_i}{\sigma_i} \right)\ket{\psi_i}\bra{\psi_i^\perp}  \right) \right] \\
        =& \Tr\left[\sum_{i=1}^n \rho_i  \log\left(\frac{\rho_i}{\sigma_i} \right) \ket{\psi_i} \bra{\psi_i}\right] = \sum_{i=1}^n \rho_i \log\left(\frac{\rho_i}{\sigma_i} \right) = \KL(\mu_{\CB}\| \nu_{\CB}),
    \end{align*}
which proves $(\bigstar)$ of Theorem~\ref{thm:maintheorem}.
\end{proof}

\subsection{Minimization of the KL divergence}
\label{sec:klmin}
In this section, we prove  ($\blacktriangle$) of Theorem~\ref{thm:maintheorem}, which is the statement
    \begin{equation}
            \KL(\mu_{\CB}\| \nu_{\CB}) = \inf_{\mu, \nu \in \calP(\scrP(\frH))} \Bigl\{\KL(\mu \| \nu) \;:\; \rho = \Lambda(\mu),\; \sigma = \Lambda(\nu) \Bigr\}.
    \end{equation}

For this we require the following lemmas which are proven in App.~\ref{app:commonbasisproofs}.
\begin{lemma}[Continuity of $\Lambda$]
\label{lem:continuity lambda}
    The map $\Lambda: (\mathcal{P}(\mathscr{P}(\frH)), W_1)\rightarrow (\scrS(\frH), d_{\mathtt{TR}})$ is Lipschitz continuous when $\mathscr{P}(\frH)$ is equipped with the Fubiny-Study metric $d_{\mathtt{FS}}$. Here $W_1$ denotes the 1-Wasserstein metric \cite{Santambrogio2015OptimalTF}.
\end{lemma}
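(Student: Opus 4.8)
The plan is to prove the stronger statement that $\Lambda$ is in fact $1$-Lipschitz, by pairing the primal (coupling) formulation of $W_1$ with a pointwise comparison of the two metrics on pure states. First I would fix $\mu,\nu\in\calP(\scrP(\frH))$ and an arbitrary coupling $\pi\in\calP(\scrP(\frH)\times\scrP(\frH))$ with marginals $\mu$ and $\nu$. Linearity of $\Lambda$ together with the marginal conditions gives
\[
\Lambda(\mu)-\Lambda(\nu)=\int_{\scrP(\frH)\times\scrP(\frH)}\bigl(\ket{\psi}\bra{\psi}-\ket{\varphi}\bra{\varphi}\bigr)\,\pi(\di\psi,\di\varphi),
\]
since integrating out the second (resp.\ first) variable recovers $\Lambda(\mu)$ (resp.\ $\Lambda(\nu)$). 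The trace norm then passes inside the integral by the Bochner-integral estimate $\bigl\|\int F\,\di\pi\bigr\|_1\le\int\|F\|_1\,\di\pi$, so that $d_{\mathtt{TR}}(\Lambda(\mu),\Lambda(\nu))\le\int \tfrac12\bigl\|\ket{\psi}\bra{\psi}-\ket{\varphi}\bra{\varphi}\bigr\|_1\,\pi(\di\psi,\di\varphi)$.

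The crucial ingredient is the pointwise bound relating the integrand to $d_{\mathtt{FS}}$. For two pure states the operator $\ket{\psi}\bra{\psi}-\ket{\varphi}\bra{\varphi}$ is traceless, Hermitian, and supported on the at most two-dimensional subspace $\mathrm{span}\{\psi,\varphi\}$; diagonalizing it there I would show its eigenvalues are $\pm\sqrt{1-|\braket{\psi|\varphi}|^2}$, whence
\[
d_{\mathtt{TR}}(\ket{\psi}\bra{\psi},\ket{\varphi}\bra{\varphi})=\tfrac12\bigl\|\ket{\psi}\bra{\psi}-\ket{\varphi}\bra{\varphi}\bigr\|_1=\sqrt{1-|\braket{\psi|\varphi}|^2}.
\]
Substituting $|\braket{\psi|\varphi}|=\cos d_{\mathtt{FS}}(\psi,\varphi)$ rewrites the right-hand side as $\sin d_{\mathtt{FS}}(\psi,\varphi)$, and the elementary inequality $\sin t\le t$ for $t\ge 0$ delivers $d_{\mathtt{TR}}\le d_{\mathtt{FS}}$ on pure states.

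Combining the two steps gives $d_{\mathtt{TR}}(\Lambda(\mu),\Lambda(\nu))\le\int d_{\mathtt{FS}}(\psi,\varphi)\,\pi(\di\psi,\di\varphi)$ for every coupling $\pi$; taking the infimum over couplings and invoking the definition of $W_1$ yields $d_{\mathtt{TR}}(\Lambda(\mu),\Lambda(\nu))\le W_1(\mu,\nu)$, i.e.\ Lipschitz continuity with constant $1$. There is no serious obstacle here: the only points requiring care are (i) correctly recalling the exact pure-state trace-distance formula $\sqrt{1-|\braket{\psi|\varphi}|^2}$ from the rank-two spectral computation, and (ii) justifying that the trace-norm triangle inequality holds under the operator-valued integral, which follows from continuity (hence measurability) of the integrand on the compact metric space $(\scrP(\frH),d_{\mathtt{FS}})$ together with convexity of the trace norm. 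As an alternative I could instead dualize, writing $d_{\mathtt{TR}}$ as a supremum over observables $O$ with $\|O\|_\infty\le 1$ and bounding the Lipschitz constant of $\psi\mapsto\braket{\psi|O|\psi}$, but the coupling argument is cleaner and avoids Kantorovich--Rubinstein duality.
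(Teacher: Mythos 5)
Your proposal is correct and follows essentially the same route as the paper's proof: bound $d_{\mathtt{TR}}(\Lambda(\mu),\Lambda(\nu))$ by the integral of the pure-state trace distance over an arbitrary coupling, use $d_{\mathtt{TR}}(\ket{\psi}\bra{\psi},\ket{\varphi}\bra{\varphi})=\sqrt{1-|\braket{\psi|\varphi}|^2}=\sin d_{\mathtt{FS}}(\psi,\varphi)\le d_{\mathtt{FS}}(\psi,\varphi)$, and infimize over couplings to conclude that $\Lambda$ is a $1$-Lipschitz contraction. You in fact supply slightly more detail than the paper does (the rank-two spectral computation and the Bochner-integral justification), but the argument is identical in substance.
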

\begin{lemma}[Discretization]
\label{lem:discretization}
    Let $\mu$, $\nu\in \mathcal{P}(\scrP(\frH))$ be measures realizing faithful states $\rho$, $\sigma\in\scrS_+(\frH)$, respectively. Then there exists a sequence of discrete measures $\mu_n$, $\nu_n$ realizing $\rho_n$, $\sigma_n$, respectively, such that 
    \begin{enumerate}[label=(\roman*)]
        \item $\rho_n\rightarrow\rho$ and $\sigma_n\rightarrow\sigma$ in $\scrS(\frH)$, i.e., $d_\mathtt{TR}(\rho_n,\rho)\to 0$ and $d_\mathtt{TR}(\sigma_n,\sigma)\to 0$;
        \item $\KL(\mu\|\nu)\geq \limsup_{n\rightarrow\infty} \KL(\mu_n\|\nu_n)$;
        \item $\KL(\mu_n\|\nu_n)\geq \BS(\rho_n\|\sigma_n)$.
    \end{enumerate}
\end{lemma}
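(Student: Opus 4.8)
The plan is to realize the discretization as a single deterministic coarse-graining of $\scrP(\frH)$ that is simultaneously geometrically fine — so the realized states converge — and information-reducing — so the KL divergence cannot increase. Using compactness of $(\scrP(\frH),d_{\mathtt{FS}})$ from \refProposition{prop:fubinistudy}, for each $n$ I would cover $\scrP(\frH)$ by finitely many balls of radius $1/n$ and refine this cover into disjoint measurable cells $\{A_k^{(n)}\}_{k=1}^{m_n}$ with mesh $\delta_n := \max_k \mathrm{diam}(A_k^{(n)}) \le 2/n \to 0$. Choosing a representative $\psi_k^{(n)}\in A_k^{(n)}$ in each cell defines a measurable \emph{quantization map} $T_n:\scrP(\frH)\to\scrP(\frH)$ by $T_n(\psi):=\psi_k^{(n)}$ for $\psi\in A_k^{(n)}$. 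I then set $\mu_n := (T_n)_\ast\mu$ and $\nu_n := (T_n)_\ast\nu$, which are discrete measures supported on the common finite set $\{\psi_k^{(n)}\}_k$ and realize $\rho_n := \Lambda(\mu_n)$ and $\sigma_n := \Lambda(\nu_n)$.

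For (i), I would bound the $1$-Wasserstein distance by the diagonal coupling $(\mathrm{id},T_n)_\ast\mu$, whose transport cost is $\int_{\scrP(\frH)} d_{\mathtt{FS}}(\psi,T_n(\psi))\,\mu(\di\psi)\le\delta_n$, since each point is displaced only within its own cell; hence $W_1(\mu_n,\mu)\le\delta_n\to 0$, and likewise for $\nu$. Lipschitz continuity of $\Lambda$ (\refLemma{lem:continuity lambda}) then gives $d_{\mathtt{TR}}(\rho_n,\rho)\le L\,W_1(\mu_n,\mu)\to 0$ and $d_{\mathtt{TR}}(\sigma_n,\sigma)\to 0$.

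For (ii), the key observation is that $T_n$ induces the deterministic Markov kernel $\sfK_n(\psi,\cdot):=\delta_{T_n(\psi)}$, for which $\sfK_n{\circ}\mu=(T_n)_\ast\mu=\mu_n$ and $\sfK_n{\circ}\nu=\nu_n$; the DPI for the KL divergence (\refProposition{thm: finite DPI}) then yields $\KL(\mu_n\|\nu_n)\le\KL(\mu\|\nu)$ for \emph{every} $n$, so coarse-graining never increases the divergence, and taking the limit superior gives (ii) (when $\mu\not\ll\nu$ the right-hand side is $+\infty$ and the bound is vacuous but still valid). For (iii), writing $p_k:=\mu(A_k^{(n)})$ and $q_k:=\nu(A_k^{(n)})$, I would introduce the measure-and-prepare channel $\Phi_n(X):=\sum_k \braket{e_k|X|e_k}\,|\psi_k^{(n)}\rangle\langle\psi_k^{(n)}|$ on diagonal classical states, which is manifestly CPTP and satisfies $\Phi_n(\mathrm{diag}(p))=\rho_n$, $\Phi_n(\mathrm{diag}(q))=\sigma_n$. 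Since $\KL(\mu_n\|\nu_n)=\sum_k p_k\log(p_k/q_k)=\BS(\mathrm{diag}(p)\|\mathrm{diag}(q))$ for the commuting diagonal states, the quantum DPI for the BS entropy (\refProposition{prop:quantumdpi}) gives $\BS(\rho_n\|\sigma_n)=\BS(\Phi_n(\mathrm{diag}(p))\|\Phi_n(\mathrm{diag}(q)))\le\KL(\mu_n\|\nu_n)$, establishing (iii).

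The point requiring the most care is that one construction must serve all three conclusions at once: the pushforward by $T_n$ must be fine enough that $W_1(\mu_n,\mu)\to 0$ for (i), yet be a genuine coarse-graining so that the classical DPI applies for (ii), while keeping $\mu_n,\nu_n$ on a common support so the preparation channel in (iii) is well-defined. The cell-partition pushforward reconciles these demands, because merging geometrically nearby points is at once a small perturbation in $W_1$ and an information-reducing operation. Secondary care is needed to guarantee measurability of $T_n$ and to confirm that pushforward preserves absolute continuity — if $\nu_n(B)=0$ then $\nu(T_n^{-1}B)=0$, hence $\mu(T_n^{-1}B)=0$ and $\mu_n(B)=0$ — so that $\KL(\mu_n\|\nu_n)$ is finite whenever $\KL(\mu\|\nu)$ is.
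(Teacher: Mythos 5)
Your proposal is correct and follows essentially the same route as the paper: the same ball-cover partition with pushforward (quantization) measures, the same $W_1$ coupling plus Lipschitz continuity of $\Lambda$ for (i), the same deterministic Markov kernel plus classical DPI for (ii), and your measure-and-prepare channel for (iii) is exactly the paper's CPTP map with Kraus operators $K_{n,j}=\ket{\psi_{n,j}}\bra{\phi_{n,j}}$ applied to the commuting diagonal states. No gaps; your closing remarks on measurability and preservation of absolute continuity are sound.
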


Using these lemmas, it is possible to prove ($\blacktriangle$) of Theorem~\ref{thm:maintheorem}.
\begin{proof}[Proof of Theorem~\ref{thm:maintheorem}]
Let $\mu$ and $\nu$ be two measures realizing $\rho$ and $\sigma$ respectively. Taking sequences of discrete measures $\mu_n,\nu_n$ and states $\rho_n,\sigma_n$ given by Lemma~\ref{lem:discretization}, 
\begin{equation}
    \KL(\mu \|\nu)\geq \limsup_{n\to \infty}\KL(\mu_n \|\nu_n)  \geq \liminf_{n\to\infty}\BS(\rho_n \| \sigma _n) \geq \BS(\rho \| \sigma) = \KL(\mu_{\CB}\| \nu_{\CB}),
\end{equation}
where the first inequality follows from $(ii)$, the second from $(iii)$, the third from $(i)$ combined with the lower semi-continuity of $\BS$ with respect to the trace distance \cite{Matsumoto2018AF-Divergenceb}, and the last equality from Lemma~\ref{lem:minimal}. 
\end{proof}

\section{Consequences of Theorem~\ref{thm:maintheorem}}
\label{sec:corollary}

In this section, we discuss two consequences of the unraveling  characterization of the Belavkin-Staszewski entropy. The first is related to entropy decrease under Lindblad equations and the second to a large deviation principle.

\subsection{Contraction along the Lindblad flow}
The unraveling of the Lindblad equation Eq.~\eqref{eq:lindblad} with the stochastic Schr\"odinger equation (SSE) Eq.~\eqref{eq:sse} allows one to obtain a refined contraction estimate. Indeed, consider the SSE 
\begin{equation}
    \di |\psi\rangle=-iH|\psi\rangle\di t-\frac{1}{2}\sum_j \gamma_j^2 S_j^\dagger S_j|\psi\rangle\di t + i\sum_j \gamma_j S_j |\psi\rangle \di X_{j},
\end{equation}
where the processes $X_j$ in Eq.~\eqref{eq:sse} is simply taken to be Brownian motions with quadratic variation $[X_j]=t$. The corresponding Fokker-Planck equation \cite{risken1996fokker} for the law $\mu_t = \text{Law}(\psi_t)$ of the process is given by
\begin{equation}\label{eq:fokker-planck}
    \partial_t \mu_t = L^*\mu_t,
\end{equation}
where $L^*$ is the adjoint to the the infinitesimal generator
\[
    LF(\psi) = DF(\psi)\biggl[iH\psi + \frac{1}{2}\sum_j S^\dagger_j S_j\psi\biggr] + \frac{1}{2}\sum_j D^2F(\psi)[S_j\psi,S_j\psi].
\]
Here, $DF$ and $D^2F$ are the first and second-order Fr\'echet derivatives of the function $F$ on $\scrP(\frH)$, respectively. 

\medskip
We further introduce the \emph{relative KL Fisher information} $I_{\mathtt{KL}}(\mu\|\nu)$ of $\mu$ with respect to $\nu$ defined by
\begin{equation}
    I_\mathtt{KL}(\mu\|\nu) := \begin{cases}
        \displaystyle \sum_j \int_{\scrP(\frH)} \left(D \sqrt{\frac{\di\mu}{\di\nu}}[S_j\psi]\right)^2 \nu(\di\psi) & \text{if $\mu\ll \nu$,} \\
        +\infty & \text{otherwise}.
    \end{cases}
\end{equation}
Accordingly, we define the \emph{relative unravel Fisher information} as
\begin{equation}
    I_\mathtt{UNR}(\rho\|\sigma) := \inf_{\mu, \nu \in \calP(\scrP(\frH))} \Bigl\{I_\mathtt{KL}(\mu\|\nu) \;:\; \rho = \Lambda(\mu),\; \sigma = \Lambda(\nu) \Bigr\}.
\end{equation}
Note that both $I_\mathtt{KL}\geq0$ and $I_\mathtt{UNR}\geq0$.
\begin{theorem}[Contraction under the Lindblad flow]
Let $\Phi_t$ be the solution operator to the Lindblad equation, i.e., $\rho_t=\Phi_t(\rho_0)$ satisfies Eq.~\eqref{eq:lindblad} for any initial state $\rho_0\in\scrS(\frH)$. Then the following entropy-dissipation inequality holds
\begin{equation}
    \SSE(\rho_t\|\sigma_t) + 2\int_0^t I_\mathtt{UNR}(\rho_s\|\sigma_s)\,\di s \leq \SSE(\rho_0\|\sigma_0)\qquad\text{for any $\rho_0,\sigma_0\in\scrS_+(\frH)$.}
\end{equation}
In particular, the same entropy-dissipation inequality holds for $\BS$ instead of $\SSE$.
\end{theorem}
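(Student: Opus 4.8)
The plan is to establish the inequality first for the classical laws that unravel the two quantum flows, and then descend to the quantum quantities through the infimum definitions of $\SSE$ and $I_\mathtt{UNR}$. Fix any probability measures $\mu_0,\nu_0\in\calP(\scrP(\frH))$ that realize $\rho_0,\sigma_0$ with $\KL(\mu_0\|\nu_0)<\infty$; the common-basis pair $\mu_{\CB},\nu_{\CB}$ of Theorem~\ref{thm:maintheorem} is one such choice. Evolving these laws under the Fokker--Planck equation~\eqref{eq:fokker-planck} yields curves $\mu_t,\nu_t$, and because the stochastic Schr\"odinger equation unravels the Lindblad equation~\eqref{eq:lindblad}, these laws realize the quantum flow at all times, $\Lambda(\mu_t)=\Phi_t(\rho_0)=\rho_t$ and $\Lambda(\nu_t)=\sigma_t$. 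Consequently the infimum definitions give, for every $s,t\ge0$,
\begin{equation}
\SSE(\rho_t\|\sigma_t)\le \KL(\mu_t\|\nu_t),\qquad I_\mathtt{UNR}(\rho_s\|\sigma_s)\le I_\mathtt{KL}(\mu_s\|\nu_s).
\end{equation}

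The core of the proof is a de Bruijn-type entropy-dissipation identity for two solutions of the \emph{same} Fokker--Planck equation. Writing $h_t=\di\mu_t/\di\nu_t$ and denoting by $\Gamma(f):=\tfrac12\sum_j\bigl(Df[S_j\psi]\bigr)^2$ the carr\'e du champ of the diffusion part of the generator $L$, I would differentiate the relative entropy and use the diffusion chain rule $L(g\circ h)=(g'\circ h)\,Lh+(g''\circ h)\,\Gamma(h)$ with $g(x)=x\log x$. The decisive feature is that the entire first-order drift --- both the Hamiltonian term $iH\psi$ and the term $\tfrac12\sum_jS_j^\dagger S_j\psi$ --- obeys the ordinary chain rule and therefore cancels identically against the transport of $\nu_t$, leaving only the second-order part:
\begin{equation}
\frac{\di}{\di t}\KL(\mu_t\|\nu_t)=-\int_{\scrP(\frH)}\frac{\Gamma(h_t)}{h_t}\,\nu_t(\di\psi)=-2\,I_\mathtt{KL}(\mu_t\|\nu_t),
\end{equation}
where the last equality uses the pointwise identity $\Gamma(h_t)/h_t=2\sum_j\bigl(D\sqrt{h_t}[S_j\psi]\bigr)^2$ together with the definition of $I_\mathtt{KL}$. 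Integrating from $0$ to $t$ gives the exact classical balance
\begin{equation}
\KL(\mu_t\|\nu_t)+2\int_0^t I_\mathtt{KL}(\mu_s\|\nu_s)\,\di s=\KL(\mu_0\|\nu_0).
\end{equation}

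Combining the two displays yields, for the fixed initial pair,
\begin{equation}
\SSE(\rho_t\|\sigma_t)+2\int_0^t I_\mathtt{UNR}(\rho_s\|\sigma_s)\,\di s\le \KL(\mu_t\|\nu_t)+2\int_0^t I_\mathtt{KL}(\mu_s\|\nu_s)\,\di s=\KL(\mu_0\|\nu_0).
\end{equation}
Since the left-hand side does not depend on the choice of realizing $\mu_0,\nu_0$, taking the infimum over all such pairs replaces the right-hand side by $\SSE(\rho_0\|\sigma_0)$, which is the claim. The statement for $\BS$ is then immediate from Theorem~\ref{thm:maintheorem}, which identifies $\SSE$ with $\BS$ on faithful states; this identification is preserved along the flow, so the same inequality holds with $\BS$ in place of $\SSE$.

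The main obstacle is analytic rather than structural: justifying the formal dissipation computation. One must supply well-posedness and regularity for~\eqref{eq:fokker-planck} on the compact manifold $\scrP(\frH)\simeq\mathbb{CP}^{n-1}$ (Proposition~\ref{prop:fubinistudy}), where compactness and the absence of boundary remove all boundary terms in the integration by parts underlying the chain-rule step. In particular one needs the relative density $h_t$ to exist, be positive, and be smooth enough to differentiate under the integral; since the initial data may be singular (the common-basis measures are discrete), this relies on the hypoelliptic smoothing of the diffusion for $t>0$ --- requiring a H\"ormander-type spanning condition on the fields $\psi\mapsto S_j\psi$ --- together with continuity of $\KL(\mu_t\|\nu_t)$ as $t\downarrow 0$ to transfer the identity to the initial value $\KL(\mu_0\|\nu_0)$. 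Verifying the unraveling consistency $\Lambda(\mu_t)=\rho_t$ and the propagation of absolute continuity $\mu_t\ll\nu_t$ completes the rigorous version.
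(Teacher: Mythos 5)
Your proposal is correct and follows essentially the same route as the paper: unravel via the common-basis (or any realizing) initial measures, evolve them along the Fokker--Planck equation, use the de Bruijn-type identity $\frac{\di}{\di t}\KL(\mu_t\|\nu_t)=-2I_{\mathtt{KL}}(\mu_t\|\nu_t)$ together with the consistency $\Lambda(\mu_t)=\rho_t$, and descend to $\SSE$ and $I_\mathtt{UNR}$ through their infimum definitions. The only cosmetic difference is that you take an infimum over all realizing initial pairs at the end, whereas the paper starts directly from the common-basis pair which attains $\SSE(\rho_0\|\sigma_0)$ by Theorem~\ref{thm:maintheorem}; your remarks on the analytic regularity needed to justify the formal dissipation computation are a fair caveat that the paper also leaves implicit.
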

\begin{proof}
    Consider a common basis $\{\psi_j\}_{j=1}^n\in\CB(\rho_0,\sigma_0)$, which gives the initial measures $\mu_{0,\CB}$ and $\nu_{0,\CB}$ such that $\KL(\mu_{0,\CB}\|\nu_{0,\CB})=\SSE(\rho_0\|\sigma_0)$. The evolutions of $\mu_t$ and $\nu_t$ of $\mu_{0,\CB}$ and $\nu_{0,\CB}$ along the Fokker-Planck equation of Eq.~\eqref{eq:fokker-planck} yield the entropy-dissipation inequality 
    \begin{equation}
        \frac{\di}{\di t} \KL(\mu_t\|\nu_t)=-2\sum_j\int_{\scrP(\frH)}\left(D\sqrt{\frac{\di\mu_t}{\di\nu_t}}[S_j\psi]\right)^2\nu_t(\di\psi)=-2I_{\mathtt{KL}}(\mu_t\|\nu_t)\leq0.
    \end{equation}
    The calculation of this time derivative is deferred to App.~\ref{app:lindbladflow}. Note that since
    \begin{equation}
    \begin{aligned}
    \Lambda(\mu_t)&=\int_{\scrP(\frH)}|\psi\rangle\langle\psi|\mu_t(\di\psi) 
    =\int_{\scrP(\frH)}\mathbb{E}\bigl[|\psi_t\rangle\langle\psi_t|\;\big|\;|\psi_0\rangle =|\psi\rangle \bigr]\mu_{0,\CB}(\di\psi)\\&=\int_{\scrP(\frH)}\Phi_t(|\psi\rangle\langle\psi|)\mu_{0,\CB}(\di\psi)=\Phi_t\left(\int_{\scrP(\frH)}|\psi\rangle\langle\psi|\mu_{0,\CB}(\di\psi)\right)=\Phi_t(\Lambda(\mu_{0,\CB}))=\Phi_t(\rho_0)=\rho_t,
    \end{aligned}
    \end{equation}
    and similarly for $\Lambda(\nu_t)=\sigma_t$, $\mu_t$ and $\nu_t$ realize $\rho_t$ and $\sigma_t$, respectively. Consequently, $\KL(\mu_t\|\nu_t)\ge \SSE(\rho_t\|\sigma_t)$ and $I_\mathtt{KL}(\mu_t\|\nu_t)\ge I_\mathtt{UNR}(\rho_t\|\sigma_t)$, and we obtain
    \[
        \SSE(\rho_t\|\sigma_t) + 2\int_0^t I_{\mathtt{UNR}}(\rho_s\|\sigma_s)\,\di s \le \KL(\mu_{0,\CB}\|\nu_{0,\CB}) = \SSE(\rho_0\|\sigma_0),
    \]
    as asserted. The last statement follows directly from Theorem~\ref{thm:maintheorem}, concluding the proof.
\end{proof}
Notice that if the relative KL Fisher information $I_\mathtt{KL}$ can be made to bound the KL divergence $\KL$, this would imply exponential decay of the BS relative entropy $\BS$. This forms an interesting topic for further research.

\subsection{Large deviations perspective}
\label{sec:largedev}
Quantum versions of Stein's Lemma and Sanov's theorem identify the Umegaki relative entropy as the exponential rate of decay of the Type-II error probability when discriminating two quantum states by sampling observables \cite{steinslemma1,quantumstein2,quantumsanov1}. We postulate that the Belavkin-Staszewski relative entropy plays an equivalent role when directly sampling pure states from statistical ensembles that realize the quantum states. 

\begin{corollary}[Large deviations Belavkin-Staszewski]
Let $(\calP(\scrP(\frH)), W_1)$ and $(\scrS_+(\frH), d_{\mathtt{TR}})$ be two metric spaces. Let $\rho,\sigma\in\scrS_+(\frH)$ be two faithful states. Let $\nu \in \calP(\scrP(\frH))$ be a probability measure realizing $\sigma$ and consider the empirical measure $\nu_n = \frac{1}{n}\sum_{i=1}^n \delta_{\psi_i}$. Assume there is a $\mu^* \in\calP(\scrP(\frH)) $ realizing $\rho$ such that $\KL(\mu^* \Vert \nu) < \infty$ (a sufficient condition for this is $\mathtt{supp}(\nu)=\scrP(\frH))$. Then

\begin{enumerate}[label=(\roman*)]    
\item The following large deviation principle holds
\begin{equation}
    \forall\epsilon>0,\quad\lim_{n \to \infty} \frac{1}{n}\log \mathbb{P}(\Lambda(\nu_n) \in B_\epsilon(\rho)) = -\inf_{\eta \in B_\epsilon (\rho)} J(\eta),\qquad J(\eta) := \inf_{\mu \in \Lambda^{-1}(\eta)} \KL(\mu\|\nu),
\end{equation}
where $B_\epsilon(\rho) = \{\eta \in \scrS(\frH): d_{\mathtt{TR}}(\eta , \rho) < \epsilon\}$. \item The limit of $(i)$ gives
\begin{equation}
    \lim_{\epsilon \to 0} \lim_{n \to \infty} \frac{1}{n}\log \mathbb{P}(\Lambda(\nu_n)\in B_\epsilon(\rho))  \leq - \BS(\rho \Vert \sigma),
\end{equation}
with equality when $\nu = \nu_{\CB}$. 
\item 
In particular, when the support of $\nu$ is a basis $\{\varphi_i\}_{i=1}^n $ of $\frH$, the rate function reduces to
\begin{equation}
    J(\eta) = \begin{cases}
        \BS(\eta \Vert \sigma) & \text{when }\{\varphi_i\}_{i=1}^n \in \CB(\eta, \sigma) ,\\
        + \infty & \text{else}.
    \end{cases}
\end{equation}
\end{enumerate}
\end{corollary}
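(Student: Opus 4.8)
The plan is to derive the large deviation principle of part (i) by composing two results already in hand --- Sanov's theorem (Proposition~\ref{prop: sanov}) furnishes an LDP for the empirical measure $\nu_n$, and the Contraction Principle (Proposition~\ref{thm: contraction}) pushes it forward through the realization map $\Lambda$ --- and then to identify the resulting rate function with $\BS$ via the entropy-equivalence Theorem~\ref{thm:maintheorem}. Concretely, for (i) I would first note that $\scrP(\frH)$ is compact (Proposition~\ref{prop:fubinistudy}), hence Polish, and that on a compact space the $W_1$ topology coincides with the weak topology, so Sanov's theorem applies and gives an LDP at rate $n$ for $\nu_n$ with good rate function $\KL(\cdot\|\nu)$. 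Since $\Lambda$ is Lipschitz (Lemma~\ref{lem:continuity lambda}), hence continuous from $(\calP(\scrP(\frH)),W_1)$ to $(\scrS(\frH),d_{\mathtt{TR}})$, the Contraction Principle transfers this to an LDP for $\Lambda(\nu_n)$ with lower semi-continuous rate function $J(\eta)=\inf_{\mu\in\Lambda^{-1}(\eta)}\KL(\mu\|\nu)$. Evaluating the LDP lower bound on the open ball $B_\epsilon(\rho)$ and the upper bound on its closure yields the stated limit whenever $B_\epsilon(\rho)$ is a continuity set of $J$, which fails for at most countably many $\epsilon$ and is therefore harmless for the subsequent limit.

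For (ii), letting $\epsilon\to 0$ and using that $\rho\in B_\epsilon(\rho)$ together with the lower semi-continuity of $J$ gives $\lim_{\epsilon\to 0}\inf_{\eta\in B_\epsilon(\rho)}J(\eta)=J(\rho)$, so the double limit equals $-J(\rho)$. Because $J(\rho)$ minimizes $\KL(\mu\|\nu)$ over $\mu$ realizing $\rho$ while holding $\nu$ fixed, relaxing the constraint to minimize over both measures can only lower the value, whence $J(\rho)\ge \SSE(\rho\|\sigma)=\BS(\rho\|\sigma)$ by Theorem~\ref{thm:maintheorem}; this is exactly the asserted $\le-\BS(\rho\|\sigma)$. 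When $\nu=\nu_{\CB}$, the companion measure $\mu_{\CB}$ realizes $\rho$, lies in $\Lambda^{-1}(\rho)$, and satisfies $\KL(\mu_{\CB}\|\nu_{\CB})=\BS(\rho\|\sigma)$ by part $(\bigstar)$ of Theorem~\ref{thm:maintheorem}, giving the reverse bound $J(\rho)\le\BS(\rho\|\sigma)$ and hence equality.

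For (iii), when $\supp(\nu)=\{\varphi_i\}_{i=1}^n$ is a basis, absolute continuity $\mu\ll\nu$ forces $\mu$ to be supported on $\{\varphi_i\}$, so $\mu=\sum_i\eta_i\delta_{\varphi_i}$ and $\Lambda(\mu)=\sum_i\eta_i|\varphi_i\rangle\langle\varphi_i|$. Since $\nu$ already realizes $\sigma$ in this basis, such a $\mu$ realizes $\eta$ precisely when $\{\varphi_i\}\in\CB(\eta,\sigma)$; in that case the weights $\eta_i$ are uniquely fixed by the linear independence of the projectors (seen by pairing with the dual basis), and $J(\eta)=\KL(\mu\|\nu)=\BS(\eta\|\sigma)$ by $(\bigstar)$. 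If instead $\{\varphi_i\}\notin\CB(\eta,\sigma)$, then no measure absolutely continuous with respect to $\nu$ can realize $\eta$, so every competitor in $\Lambda^{-1}(\eta)$ is singular and contributes $+\infty$, giving $J(\eta)=+\infty$.

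The step I expect to be the main obstacle is the topological bookkeeping underlying (i): verifying that Sanov's theorem and the Contraction Principle apply verbatim in the $W_1$ metric rather than the abstract weak topology, and in particular upgrading the one-sided LDP bounds on $B_\epsilon(\rho)$ and its closure into the exact limit claimed in the statement. Once the LDP and its rate function are secured, parts (ii) and (iii) reduce to bookkeeping with Theorem~\ref{thm:maintheorem} and the absolute-continuity structure of the KL divergence.
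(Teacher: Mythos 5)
Your overall route --- Sanov's theorem for $\nu_n$, the contraction principle through $\Lambda$, and then identification of the rate function via Theorem~\ref{thm:maintheorem} --- is exactly the paper's, and your treatments of parts (ii) and (iii) match the paper's essentially step for step. The problem is the step you yourself flag as the main obstacle in part (i): you do not actually close it. Upgrading the one-sided LDP bounds to the exact limit requires $\inf_{\eta\in B_\epsilon(\rho)}J(\eta)=\inf_{\eta\in\overline{B}_\epsilon(\rho)}J(\eta)$, and your appeal to ``$B_\epsilon(\rho)$ is a continuity set of $J$ for all but countably many $\epsilon$'' (which does follow from monotonicity of $\epsilon\mapsto\inf_{B_\epsilon}J$) only yields the stated limit for all but countably many radii, whereas the corollary asserts it for \emph{every} $\epsilon>0$. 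So as written you prove a strictly weaker version of (i).

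The missing idea, which is the actual content of the paper's proof of (i), is a convexity interpolation. Because $\Lambda$ is continuous from a compact space, $\Lambda^{-1}(\eta)$ is compact, and lower semi-continuity of $\KL(\cdot\|\nu)$ gives minimizers $\mu_\eta\in\Lambda^{-1}(\eta)$ and $\mu_\rho\in\Lambda^{-1}(\rho)$ with $J(\rho)\le\KL(\mu^*\|\nu)<\infty$ by hypothesis. For any $\eta\in\overline{B}_\epsilon(\rho)$ with $J(\eta)<\infty$, set $\mu_t=(1-t)\mu_\eta+t\mu_\rho$; linearity of $\Lambda$ gives $d_{\mathtt{TR}}(\Lambda(\mu_t),\rho)=(1-t)\,d_{\mathtt{TR}}(\eta,\rho)<\epsilon$, so $\Lambda(\mu_t)$ lies in the \emph{open} ball, and joint convexity of the KL divergence gives $\inf_{B_\epsilon(\rho)}J\le\KL(\mu_t\|\nu)\le(1-t)J(\eta)+tJ(\rho)$; letting $t\to0$ yields $\inf_{B_\epsilon(\rho)}J\le J(\eta)$ for every $\eta$ in the closed ball, hence equality of the two infima for every $\epsilon$. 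You should incorporate this argument (or an equivalent one); note that it is also where the hypothesis $\KL(\mu^*\|\nu)<\infty$ is genuinely used, a hypothesis your sketch never invokes.
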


\begin{proof}
First, note that the space of probability measures on the pure states $\calP(\scrP(\frH))$ is a Polish space when equipped with the topology of weak convergence. Since $(\scrP(\frH), d_\mathtt{FS})$ is a compact space, the $W_1$ metric metricizes the weak topology \cite{Santambrogio2015OptimalTF}. By Sanov's theorem Prop.~\ref{prop: sanov}, $\nu_n$ satisfies the LDP with lower semi-continuous rate function $I(\mu)=\KL(\mu \Vert \nu)$. By Lemma \ref{lem:continuity lambda}, $\Lambda$ is continuous. Therefore, by the contraction principle of Prop.~\ref{thm: contraction} (with $\varGamma=\Lambda$), $\Lambda(\nu_n)$ satisfies the LDP with lower semi-continuous rate function $J(\eta) = \inf_{\mu \in \Lambda^{-1}(\eta)} \KL(\mu \Vert\nu)$.
Fix some $\epsilon >0$. Let $B_\epsilon(\rho)$ be the open ball and $\overline{B}_\epsilon(\rho)$ be its closure. It follows directly that
\begin{equation}
     -\inf_{\eta \in B_\epsilon(\rho)} J(\eta) \leq \liminf_{n\rightarrow \infty}\frac{1}{n}\log \mathbb{P}(\Lambda(\nu_n)\in B_\epsilon(\rho)) \leq \limsup_{n \to \infty}\frac{1}{n}\log \mathbb{P}(\Lambda(\nu_n)\in \overline{B}_\epsilon(\rho)) \leq -\inf_{\eta \in \overline{B}_\epsilon(\rho)} J(\eta).
\end{equation}
To establish the rate function in $(i)$, it is shown that $a(\epsilon):=\inf_{\eta \in B_\epsilon(\rho)} J(\eta) = \inf_{\eta \in \overline{B}_\epsilon(\rho)} J(\eta)=:b(\epsilon)$. It suffices to show $a(\epsilon) \le b(\epsilon)$, as the reverse direction is trivial.

\medskip
Since $\Lambda:\mathcal{P}(\mathscr{P}(\frH))\rightarrow\scrS(\frH)$ is a continuous map from a compact space to a compact space, the preimage $\Lambda^{-1}(\rho)$ is also compact \cite{prokhorov}. Since lower semi-continuous functions always achieve their infimum on compact sets, there exists a measure $\mu_{\rho}\in\Lambda^{-1}(\rho)$ such that
\begin{equation}
a(\epsilon),b(\epsilon)\leq \KL(\mu_{\rho} || \nu)=J(\rho)=\inf_{\mu\in\Lambda^{-1}(\rho)
}\KL(\mu\|\nu) \leq \KL(\mu^*\|\nu)< \infty.
\end{equation}
Let $\eta \in \overline{B}_{\epsilon}(\rho)$. If $J(\eta) = \infty$, then clearly $a(\epsilon) \le J(\eta)$ .
Assume $J(\eta) < \infty$. Then, again by lower semi-continuity and compactness, there exists $\mu_{\eta} \in \calP(\scrP(\frH))$ such that $J(\eta) = \KL(\mu_{\eta} || \nu) < \infty$. Let $t \in (0, 1)$ and define $\mu_t = (1-t)\mu_{\eta} + t\mu_{\rho}$.
Then linearity of $\Lambda$, gives
\begin{equation}
    d_{\mathtt{TR}}(\Lambda(\mu_t), \rho) = d_{\mathtt{TR}}((1-t)\eta + t\rho, \rho) = (1-t) d_{\mathtt{TR}}(\eta, \rho) < \epsilon.
\end{equation}
Thus, $\Lambda(\mu_t) \in B_{\epsilon}(\rho)$ for all $t \in (0, 1)$, which together with convexity of the relative entropy implies
\begin{equation}
    a(\epsilon) = \inf_{\eta' \in B_{\epsilon}(\rho)} J(\eta') \le J(\Lambda(\mu_t)) \le \KL(\mu_t || \nu)\leq (1-t)J(\eta) + tJ(\rho).
\end{equation}
Taking the limit as $t \to 0$, we recover $a(\epsilon) \le J(\eta)$. This holds for any $\eta \in \overline{B}_{\epsilon}(\rho)$, thus
\begin{equation}
a(\epsilon) \le \inf_{\eta \in \overline{B}_{\epsilon}(\rho)} J(\eta) = b(\epsilon).
\end{equation}
This proves statement $(i)$ of the corollary
\begin{equation}
\lim_{n \to \infty} \frac{1}{n}\log \mathbb{P}(\Lambda(\nu_n) \in B_\epsilon(\rho)) = -\inf_{\eta \in B_\epsilon (\rho)} J(\eta) .
\end{equation}
We continue with proving the limit of statement $(ii)$. Clearly, we have that
\begin{equation}
    a(\epsilon) := \inf_{\eta \in B_\epsilon(\rho)} J(\eta) \leq J(\rho).
\end{equation}
As $a(\epsilon)$ is monotonically increasing as $\epsilon$ decreases, $\lim_{\epsilon\rightarrow0}a(\epsilon)$ exists. Let $\epsilon_n\rightarrow 0$ be a decreasing sequence of positive numbers. We find a sequence $\eta_n\in B_{\epsilon_n}(\rho)$ with $\eta_n\rightarrow\rho$ in trace distance and
\begin{equation}
    a(\epsilon_n)>J(\eta_n)-\frac{1}{n}\quad\Longrightarrow\quad \lim_{n\rightarrow\infty}a(\epsilon_n)\geq\liminf_{n\rightarrow\infty}J(\eta_n)=J(\rho),
\end{equation}
where the implication follows by lower semi-continuity of $J$. Thus, $\lim_{\epsilon\rightarrow0}a(\epsilon)=J(\rho)$, allowing us to conclude that
\begin{equation}
    \lim_{\epsilon \to 0} \lim_{n \to \infty} \frac{1}{n}\log \mathbb{P}(\Lambda(\nu_n)\in B_\epsilon(\rho)) = \lim_{\epsilon \to 0} \left( - \inf_{\eta \in B_\epsilon(\rho)} J(\eta) \right) = - \lim_{\epsilon \to 0} a(\epsilon) = -J(\rho).
\end{equation}
By Theorem \ref{thm:maintheorem}, $J(\rho) \ge  \BS(\rho \| \sigma)$, where equality holds when $\nu$ is the optimal measure $\nu_{\CB}$ that achieves the minimum.

\medskip
As for $(iii)$, assume the support of $\nu$ is a basis $\{\varphi_i\}_{i=1}^n $ of $\frH$. Then, for any measure $\mu\in\mathcal{P}(\mathscr{P}(\frH))$, $\KL(\mu \Vert\nu)< \infty \Leftrightarrow \mathtt{supp}(\mu)\subseteq\{\varphi_i\}_{i=1}^n $. Therefore, for such $\mu$, it holds that $\{\varphi_i\}_{i=1}^n $ is a common basis of $\sigma=\Lambda(\nu)$ and $\eta=\Lambda(\mu)$. Hence,
\begin{equation}
    J(\eta) = \begin{cases}
        \BS(\eta \Vert \sigma) & \text{when }\{\varphi_i\}_{i=1}^n \in \CB(\eta, \sigma) ,\\
        + \infty & \text{else}.
    \end{cases} 
\end{equation}
\vspace{-2em}
\qedhere
\end{proof}

\section{Conclusion}
\label{sec:discussion}
In this work, we presented the development of a quantum relative entropy based on minimal KL divergences for classical ensembles of pure states, realizing fixed states. These classical ensembles play a pivotal role as they emerge naturally as solutions of unravelings of the Lindblad equation. Our results show that the minimal KL divergence is achieved when the two states are expressed in their unique (possibly non-orthogonal) common basis. The composition of this basis is described using a constructive algorithm. We show that our formulation of the quantum relative entropy coincides exactly with the Belavkin-Staszewski entropy on states, subsequently offering new insights in this quantity. Moreover, we use this insight to provide novel proofs of a contraction principle on the Lindblad equation as well as a quantum large deviation theorem.\\

In future work, we want to consider extensions of our methodology to non-faithful states. Specifically, we hypothesize that our novel quantum relative entropy extends to the case where the support of $\rho$ is a subset of the support of $\sigma$. Furthermore, for a single qubit, we identify the superpositions of the elements of the common basis with their dual as the measurement operators for a Helstrom measurement \cite{han2020helstrom}, which is optimal for single-shot state discrimination. We hypothesize that, in larger dimensions, the common basis plays an equivalent role.

\section*{ACKNOWLEDGEMENTS}
We thank Raul dos Santos, Pim Kersbergen, Jasper Postema, Jasper van de Kraats and Nil Dinç for fruitful discussions. This research is financially supported by the Dutch Ministry of Economic Affairs and Climate Policy (EZK), as part of the Quantum Delta NL program,  the Horizon Europe programme HORIZON-CL4-2021-DIGITAL-EMERGING-01-30 via the project 101070144 (EuRyQa), and by the Netherlands Organisation for Scientific Research (NWO) under Grant No.\ 680.92.18.05.

\section*{COMPETING INTERESTS}
The authors declare no competing interests.\\

\section*{DATA AVAILABILITY}
The data supporting the findings are available from the corresponding author upon reasonable request.

\section*{CODE AVAILABILITY}
The code supporting the findings is available from the corresponding author upon reasonable request.
\newpage

\bibliographystyle{apsrev4-1}
\bibliography{Bibliography.bib}

\newpage
\appendix

\section{Proofs of Lemmas supporting Theorem~\ref{thm:maintheorem}}
\label{app:commonbasisproofs}
\subsection*{Proof Lemma~\ref{lem:lams}}
\begin{proof}
    First, note that $|\lambda_j| \not\in \{0, 1\}$ because $\rho, \sigma$ have non-zero determinant. Let $\lambda_j\in \mathbb{C}$, then
\begin{equation}
            \det(\rho + \lambda_j(\sigma - \rho)) = \det(\rho \lambda_j) \det\left(\rho^{-1}\sigma - \left(1-\frac{1}{\lambda_j} \right)I\right)=0.
    \end{equation}
    Hence, there are as many $\lambda_j$ that satisfy $\det(\rho+\lambda_j(\sigma-\rho))=0$ as roots of the characteristic polynomial of $\rho^{-1}\sigma$. By the fundamental theorem of algebra there are at most $n$ different complex-valued solutions. Now, let $v_j$ be an eigenvector of $\rho^{-1}\sigma $ and $(1-1/\lambda_j)$ the corresponding complex eigenvalue. Express $1-1/\lambda_i=a+bi$ with $a,b\in \mathbb{R}$. Then
\begin{equation}
\rho^{-1}\sigma  \ket{v_j} = \left(1-\frac{1}{\lambda_j}\right) \ket{v_j} \Rightarrow
         \left(a+bi\right) \rho \ket{v_j}=\sigma \ket{v_j}.
\end{equation}
Taking the Hermitian conjugate of the previous equation and applying $\ket{v_j}$, and $\bra{v_j}$ respectively
\begin{equation}
        (a+bi) \langle v_j| \rho |v_j\rangle  = \langle v_j|\sigma|v_j\rangle,\qquad
        (a-bi) \langle v_j| \rho |v_j\rangle  = \langle v_j|\sigma|v_j\rangle,
\end{equation}
as $\rho$ and $\sigma$ are Hermitian, it follows that $b=0$. Hence, $\lambda_j \in \mathbb{R}$ for all $j \in \{1,..,n\}$.
\end{proof}

\subsection*{Proof Lemma~\ref{lem:multiplicity}}

\begin{proof}
    Let $1-1/\lambda_i \in \mathbb{R}$ be an eigenvalue of $\rho^{-1}\sigma$ with algebraic multiplicity $k\leq n$. Let $w_{i,j}$ be one of the corresponding eigenvectors. Then it directly follows that
\begin{equation}
        \rho^{-1}\sigma \ket{w_{i,j}} = \left(1-\frac{1}{\lambda_i}\right) \ket{w_{i,j}} \Longleftrightarrow (\rho + \lambda_i(\sigma-\rho))\ket{w_{i,j}} = 0.
    \end{equation}
    Hence, it only remains to show that the eigenvalue's algebraic multiplicity is equal to its geometric multiplicity, i.e., that $\rho^{-1}\sigma$ is diagonalizable. Consider the decomposition
\begin{equation}
        \rho^{-1}\sigma = \rho^{-1/2}\left(\rho^{-1/2}\sigma\rho^{-1/2}\right)\rho^{1/2}.
    \end{equation}
    Since $\rho^{-1}$ is Hermitian so is $\rho^{-1/2}$. It follows that $\rho^{-1/2}\sigma\rho^{-1/2}$ is Hermitian and thus diagonalizable. Then, there is an invertible matrix $P$ and a diagonal matrix $D$ such that
\begin{equation}
        \rho^{-1}\sigma = \rho^{-1/2}\left(P^{-1}DP\right) \rho^{1/2} = \left(P \rho^{1/2} \right)^{-1}D \left(P \rho^{1/2} \right).
    \end{equation}
    It follows that $\rho^{-1}\sigma$ is diagonalizable, and thus the algebraic and geometric multiplicity of its eigenvalues must be equal. Therefore, there are $k$ linearly independent eigenvectors $w_{i,j}$ of $\rho^{-1}\sigma$ with eigenvalue $1-1/\lambda_i$.
\end{proof}

\subsection*{Proof Lemma~\ref{lem:gramschmidt}}
\begin{proof}
     Let $\{w_{i,j}\}_{j=1}^k$ be a $k$-dimensional, normalized basis of the eigenspace of $\rho^{-1}\sigma$ for a given eigenvalue $1-1/\lambda_i$. Define $u_{i,j}$ recursively as 
    \begin{equation}
    \begin{aligned}
            &\ket{u_{i,1}} = \ket{w_{i,1}},\\
            &\ket{\hat u_{i,l}} = \ket{w_{i,l}}  - \sum_{j=1}^{l-1}\frac{\braket{u_{i,j}|\rho|w_{i,j}}}{\braket{u_{i,j}|\rho|u_{i,j}}}\ket{u_{i,j}}, \quad \quad \ket{u_{i,l}}= \frac{\ket{u_{i,l}}}{\| \ket{u_{i,l}} \|} \quad \forall \, l \in \{2,...,k\},
        \end{aligned}
        \end{equation}
with $\|\cdot\|$ the norm induced by the inner product. Then it directly follows that $\{u_{i,j}\}_{j=1}^k $ is a linearly independent set and all its elements are also eigenvectors of $\rho^{-1}\sigma$ with eigenvalue $1-1/\lambda_i$. Therefore, $\text{span}(\ket{u_{i,1}},...,\ket{u_{i,k}})=\text{span}(\ket{w_{i,1}},...,\ket{w_{i,k}})$. We show that by construction
\begin{equation}
        \braket{u_1|\rho|u_2} = \frac{1}{\| \ket{u_2} \|}\left( \braket{u_1|\rho|v_2}  -\frac{\braket{u_1|\rho|v_2}}{\braket{u_1|\rho|u_1}}\braket{u_1|\rho|u_1}\right)=0.
    \end{equation}
    
\noindent The rest of the cases $\braket{u_{i,j}|\rho|u_{i,l}}=0$ for $j\neq l$ can then be shown similarly using induction on the lower indices. 
\end{proof}

\subsection*{Proof Lemma~\ref{lem:basis}}
\begin{proof}
Fix some $i, k$, with $i\neq k$. Using the definition of $\psi_k$
\begin{equation}
    \braket{\psi_k|u_i} = \frac{ \braket{u_k|\rho |u_i}}{\| \rho \ket{u_k}\|}.
\end{equation}
First consider the case $\lambda_i = \lambda_k$. By the construction in Lemma~\ref{lem:gramschmidt} it follows that $\braket{u_k|\rho|u_i}=0$. Assume now that $\lambda_i\neq \lambda_k$. It follows that 
\begin{equation}
    \begin{aligned}
& (\rho +\lambda_i(\sigma-\rho))\ket{u_i} = 0 \Rightarrow \bra{u_k}\rho \ket{u_i} = -\lambda_i \bra{u_k}\sigma - \rho \ket{u_i},\\
& (\rho +\lambda_k(\sigma-\rho))\ket{u_k} = 0 \Rightarrow \bra{u_i}\rho \ket{u_k} = -\lambda_k \bra{u_i}\sigma - \rho\ket{u_k}.
\end{aligned}
\end{equation}
As $\rho$ and $\sigma - \rho$ are Hermitian
\begin{equation}
    \braket{u_k|\rho |u_i} = - \lambda_i  \braket{u_k|\sigma -\rho |u_i} = - \lambda_k  \braket{u_k|\sigma -\rho |u_i}.
\end{equation}
Since $\lambda_i \neq \lambda_k$, it follows that $\braket{u_k|\sigma -\rho |u_i} = \braket{u_k|\rho |u_i} = \braket{\psi_k|u_i}= 0$. To show that $\{u_j\}_{j=1}^n$ is linearly independent, we argue by contradiction. Fix some $j \in \{1,...,n\}$ and assume that there are scalars $\alpha_i\in \mathbb{C}$, with at least one of them non-zero, such that $\ket{u_j} = \sum_{i=1, i\neq j}^n \alpha_i \ket{u_i}$. Choose $k \in \{1,...,n\}$ such that $\alpha_k \neq 0$
\begin{equation}
        0 = \braket{\psi_k | u_j} = \sum_{i=1, i\neq j}^n \alpha_i \braket{\psi_k|u_i} = \alpha_k \braket{\psi_k|u_k} = \alpha_k \frac{\braket{u_k | \rho|u_k}}{\| \rho \ket{u_k}\|} \neq 0, 
\end{equation}
which indeed shows that $\{u_j\}_{j=1}^n$ is linearly independent.
\end{proof}

\subsection*{Proof Lemma~\ref{lem:minimal}}
\begin{proof}
    First, note that one can always use $n$ states to realize $\rho$ by eigen-decomposition. Now, we argue that at least $n$ are needed by contradiction. Assume there is a probability measure with support on $m<n$ linearly independent pure states that realizes $\rho$. Then, there are coefficients $\rho_i$ and pure states $\phi_i$ for $i \in \{1,...,m\}$ such that
\begin{equation}
        \rho = \sum_{i=1}^m \rho_i \ket{\phi_i}\bra{\phi_i}.
    \end{equation}
    Then, since $\{\phi_j\}_{j=1}^m$ do not form a basis one can construct a pure state $\phi^\perp$ such that $\braket{\phi_i | \phi^\perp}=0$ for all $i \in \{1,...,m\}$. Thus $\rho \ket{\phi^\perp}=0$.
\end{proof}

\subsection*{Proof Lemma~\ref{lem:unique}}
\begin{proof}
    Assume that there are two common bases $\{\psi_i\}_{i=1}^n$,  $\{\phi_i\}_{i=1}^n$ for the two states $\rho, \sigma$ then by definition
\begin{equation}
\rho = \sum_{i=1} ^n \rho^{\psi}_i \ket{\psi_i}\bra{\psi_i} =  \sum_{i=1} ^n \rho^{\phi}_i \ket{\phi_i}\bra{\phi_i},\quad\quad
\sigma = \sum_{i=1} ^n \sigma^{\psi}_i \ket{\psi_i}\bra{\psi_i} =  \sum_{i=1} ^n \sigma^{\phi}_i \ket{\phi_i}\bra{\phi_i}.
\end{equation}
Let $\lambda \in \mathbb{R}$. By multiplying the first and second equation by $1-\lambda$ and $\lambda$ respectively and adding them up, it follows that the line joining $\rho$ and $\sigma$ can be fully expressed on both bases. Let $1-1/\lambda_i \in \mathbb{R}$ for $i \in \{1,..,n\}$ denote the distinct eigenvalues of $\rho^{-1}\sigma$. Define the matrices $\eta_i$ in the following way
\begin{equation}
\begin{aligned}
    \eta_i :=&     \sum_{j=1}^n \left(\rho^{\psi}_j  + \lambda_i \left(\sigma^{\psi}_j  - \rho^{\psi}_j \right) \right)\ket{\psi_j}\bra{\psi_j} = \sum_{j=1}^n \left( \rho^{\phi}_j  + \lambda_i \left(\sigma^{\phi}_j  - \rho^{\phi}_j \right) \right)\ket{\phi_j}\bra{\phi_j}
    \\
    =& \sum_{j=1}^n p_{ij} \ket{\psi_j}\bra{\psi_j} = \sum_{j=1}^n q_{ij} \ket{\phi_j}\bra{\phi_j},\quad p_{ij}, q_{ij} \in \mathbb{R}.
\end{aligned}
\end{equation}
By Lemma~\ref{lem:multiplicity} and the fact that $\rho^{-1}\sigma$ has $n$ distinct eigenvalues, it follows that every $\eta_i$ has a 0 eigenvalue with geometric multiplicity 1. Since both $\{\psi_i\}_{i=1}^n$,  $\{\phi_i\}_{i=1}^n$ are linearly independent sets, it must follow that for each $i$ exactly one of the coefficients $p_{ij}$ and $q_{ik}$ must vanish. Define 

\begin{equation}
    m_p(i):=j \,\,\text{  s.t.  }\, p_{ij}=0,\quad m_q(i):=k \,\,\text{   s.t.   }\, q_{ik}=0.
\end{equation}

\noindent Assume $m_p(i_1)=m_p(i_2)=j^*$ then $\eta_{i_1}|\psi^\perp_{j^*}\rangle=(\rho-\lambda_{i_1}(\sigma-\rho))|\psi_{j^*}^\perp\rangle=0=(\rho-\lambda_{i_2}(\sigma-\rho))|\psi^\perp_{j^*}\rangle=\eta_{i_2}|\psi^\perp_{j^*}\rangle\Rightarrow \lambda_{i_1}=\lambda_{i_2}\Rightarrow i_1=i_2.$ So $m_p$ is a bijection, and analogously so is $m_q$.\\

\noindent It is trivial to see that $|\psi_{m_p(i)}^\perp\rangle=|\phi_{m_q(i)}^\perp\rangle$. Thus, the bases $\{\psi_i\}_{i=1}^n$, 5$\{\phi_i\}_{i=1}^n$ share a dual basis and thus are the same up to permutation and phase, with the permutation given by $\pi$ such that $\pi(m_p(i))= m_q(i)$ for each $i \in \{1,...,n\}$.
\end{proof}

\subsection*{Proof Lemma~\ref{lem:inverse}}
\begin{proof}
   Let $\hat\rho^{-1} := \sum_{i=1}^n \frac{1}{\rho_i}|\psi_i ^\perp\rangle\langle\psi_i^\perp|$. It suffices to show that $\rho \hat\rho^{-1}=\hat\rho^{-1} \rho = I$. First, note that since $\{\psi_i^\perp\}_i^n$ is the dual basis of $\{\psi_i\}_i^n$ it follows that $\braket{\psi_i|\psi_j^\perp}=\delta_{ij}$. Then
    \begin{align}
        \rho \hat\rho^{-1} = \sum_{i=1}^n \sum_{j=1}^n \frac{\rho_i}{\rho_j}\ket{\psi_i}\braket{\psi_i|\psi_j^\perp}\bra{\psi_j^\perp} = \sum_{i=1}^n \ket{\psi_i}\bra{\psi_i^\perp}.
    \end{align}
    By Theorem \ref{thm: common basis}, $\{\psi_i\}_i^n$ is a basis. Therefore, it suffices to show that $\rho \hat\rho^{-1}\ket{\psi_i} = \ket{\psi_i}$ for $i \in \{1,...,n\}$.
    \begin{align}
        \rho \hat\rho^{-1}\ket{\psi_i} = \sum_{j=1}^n \ket{\psi_j}\braket{\psi_j^\perp|\psi_i} = \ket{\psi_i}.
    \end{align}
    The same argument holds using the fact that $\{\psi_i^\perp\}_i^n$ is also a basis:
    \[
        \hat\rho^{-1} \rho  = \sum_{i=1}^n \ket{\psi_i ^\perp}\bra{\psi_i} \Rightarrow \hat\rho^{-1} \rho \ket{\psi_i^\perp} = \ket{\psi_i^\perp}. \qedhere
    \]
\end{proof}
\subsection*{Proof Lemma~\ref{lem:continuity lambda}}\label{app:proof:continuity lambda}
\begin{proof}
Consider the map $\Lambda: (\mathcal{P}(\mathscr{P}(\frH)), W_1)\rightarrow (\scrS(\frH), d_{\mathtt{TR}})$ given by
\begin{equation}
    \Lambda(\mu) = \int_{\scrS(\frH)} \ket{\psi}\bra{\psi} \mu(d \psi).
\end{equation}
Equip $\scrP(\frH)$ with the Fubini-Study metric $d_{\mathtt{FS}}$. Let $\mu, \nu \in \mathcal{P}(\scrP(\frH))$. For any coupling $\pi$ of $\mu$ and $\nu$, we have that
\begin{align*}
     d_{\mathtt{TR}}(\Lambda(\mu),\Lambda(\nu)) &\le \iint_{\scrP(\frH){\times}\scrP(\frH)}  d_{\mathtt{TR}}(|\psi\rangle\langle \psi|,|\varphi\rangle\langle \varphi|)\,\pi(\di\psi,\di\varphi) \\
     &= \iint_{\scrP(\frH){\times}\scrP(\frH)} \sqrt{1 - |\langle \psi|\varphi\rangle|^2}\,\pi(\di\psi,\di\varphi) 
     = \iint_{\scrP(\frH){\times}\scrP(\frH)} \sin(d_{\mathtt{FS}}(\psi,\varphi))\,\pi(\di\psi,\di\varphi) \\
     &\le \iint_{\scrP(\frH){\times}\scrP(\frH)} d_{\mathtt{FS}}(\psi,\varphi)\,\pi(\di\psi,\di\varphi).
\end{align*}
Therefore, infimizing over all couplings of $\mu$ and $\nu$ yields
\[
    d_{\mathtt{tr}}(\Lambda(\mu),\Lambda(\nu)) \le W_1(\mu,\nu).
\]
Therefore, $\Lambda$ is a Lipschitz continuous contraction. 
\end{proof}

\subsection*{Proof Lemma~\ref{lem:discretization}}\label{app:proof:discretization}
\begin{proof}
Let $\mu, \nu\in\calP(\scrP(\frH))$ be two probability measures realizing two faithful states $\rho, \sigma \in \scrS_+(\frH)$ respectively. To obtain a discrete approximation of the measures, we construct a sequence of partitions shrinking in diameter. Consider an open cover of $\scrP(\frH)$ consisting of balls of the form $\{B(\psi, 1/n), \psi \in \scrP(\frH)\}$ for $n\in\mathbb{N}$. By compactness of $\scrP(\frH)$ (see Proposition~\ref{prop:fubinistudy}), the open cover has a finite subcover, denoted by $\{B_{n,j}\}_{1\leq j \leq J_n}$. For each $n\in\mathbb{N}$, we let $\mathcal{A}_n := \{A_{n,j}\}_{1\leq j \leq J_n}$ be a family of sets, where
\begin{equation}
        A_{n,1} = B_{n,1},\quad 
        A_{n,j} = B_{n,j} \setminus{\bigcup_{i=1}^{j-1} A_{n,i}}, \quad \text{for } 1<j\leq J_n.
\end{equation}
By construction, $\mathcal{A}_n$ is a covering $\scrP(\frH)$ with mutually disjoint $d_\mathtt{FS}$-Borel sets of diameter less or equal to $2/n$. Note that if $A_{n,j}=\varnothing$, $J_n$ can be decreased. Let $\psi_{n,j}\in\scrP(\frH)$ be arbitrary states in $A_{n,j}$ for all $1\leq j\leq J_n$, $n\in\mathbb{N}$. Define sequences of discrete measures $(\mu_n)_{n\in \mathbb{N}}$ and $(\nu_n)_{n \in \mathbb{N}}\subset\calP(\scrP(\frH))$ by
\begin{equation}
        \mu_n := \sum_{j=1}^{J_n}  \mu(A_{n,j}) \delta_{\psi_{n,j}}, \qquad 
        \nu_n := \sum_{j=1}^{J_n}  \nu(A_{n,j}) \delta_{\psi_{n,j}},
\end{equation}
and correspondingly, the sequence of states $(\rho_n:=\Lambda(\mu_n))_{n\in \mathbb{N}}$ and $(\sigma_n:=\Lambda(\nu_n))_{n \in \mathbb{N}}$.

\medskip
Now we prove $(i)$ by showing $d_\mathtt{TR}(\rho_n,\rho) \to 0$ and $d_\mathtt{TR}(\sigma_n,\sigma) \to 0$. First, we equip $\mathcal{P}(\scrP(\frH))$ with the 1-Wasserstein metric $W_1$. Consider the sequence of transport maps $T_n: \scrP(\frH) \rightarrow \scrP(\frH) \times \scrP(\frH)$ given by $T_n(\psi) = (\psi, \psi_{n,j})$ for all $\psi \in A_{n,j}$ and $n \in \mathbb{N}$. Then, define the coupling $\pi_n = (T_n)_\# \mu$ as the push-forward measure of $\mu$ under $T_n$. Then
\begin{equation}
    W_1(\mu, \mu_n) \leq \iint_{\scrP(\frH){\times}\scrP(\frH)} d_{\mathtt{FS}}(\psi,\varphi)\,\pi_n(\di\psi\,\di\varphi) = \sum_{j=1}^{J_n}\int_{A_{n,j}} d_{\mathtt{FS}}(\psi, \psi_{n,j}) \mu(d\psi).
\end{equation}
By construction, for all $\psi \in A_{n,j}: d_{\mathtt{FS}}(\psi, \psi_{n,j}) \leq 2/n$ for all $i \leq J_n$. Thus $W_1(\mu, \mu_n) \leq 2/n$. Combining this with Lemma~\ref{lem:continuity lambda} we find, $d_{\mathtt{TR}}(\rho, \rho_n) = d_{\mathtt{TR}}(\Lambda(\mu), \Lambda(\mu_n)) \leq W_1(\mu, \mu_n)\leq 2/n$. Taking the limit $n \to \infty$, it follows that $\rho_n \to \rho$ and, by the same argument, $\sigma_n \to \sigma$. By lower semi-continuity of the Belavkin-Staszewski entropy $\BS$ it directly follows
\begin{equation}
    \liminf_{n\to \infty} \BS(\rho_n \| \sigma_n) \geq \BS(\rho \| \sigma).
\end{equation}
Next we prove $(ii)$. To this end, we consider the sequence of maps $\sfK_n: \scrP(\frH){\times}\calF\to [0,1]$ given by
\begin{equation}
    \sfK_n(\varphi,A) := \sum_{j=1}^{J_n}  \mathbbm{1}_{A_{n,j}}(\varphi)\, \delta_{\psi_{n,j}}(A).
\end{equation}
We will argue these maps are Markov kernels (see Def.~\ref{def:markovkernel}). Fix some $\varphi\in \scrP(\frH)$, then $\sfK_n(\cdot, \varphi) = \delta_{\psi_{n,j}}(\cdot)$ for exactly one $j$ such that $\varphi \in A_{n,j}$ because  $\mathcal{A}_n$ is a partition. It directly follows that $\sfK_n(\varphi,\scrP(\frH)) = 1$. Hence for all $\varphi\in \scrP(\frH)$, $\sfK_n(\varphi,\cdot)$ is a probability measure. Now fix a pair $(C,\psi) \in \mathcal{B} \times\scrP$. As  $\mathcal{A}_n$ is a partition, there is exactly one $j$ for which $\psi\in A_{n,j}$. If for this $j$ we have $\psi_{n,j} \in C$, then $\sfK_n(C, \psi) = 1$. If $\psi_{n,j} \not\in C$, then $\sfK_n(C, \psi) = 0$. Both of these outcomes are $\mathcal{B} _{[0,1]}$-measurable preimages. Therefore, $\sfK_n(C, \cdot)$ is $\mathcal{B}_{[0,1]}$-measurable and hence $\sfK_n$ is a Markov kernel.

\medskip
Using the previously introduced notation
\begin{equation}
    \begin{aligned}
        \sfK_n{\circ} \mu &= \int_{\scrP(\frH)} \sfK_n(\varphi,\cdot) \mu(\di\varphi) = \int_{\scrP(\frH)}\sum_{j=1} ^{J_n} \mathbbm{1}_{A_{n,j}}(\varphi) \delta_{\psi_{n,j}} \mu(\di\varphi)= \sum_{j=1}^{J_n}  \mu(A_{n,j}) \delta_{\psi_{n,j}} = \mu_n.
    \end{aligned}
\end{equation}
Similarly, $\nu_n = \sfK_n {\circ} \nu$. By the classical DPI
\begin{equation}
    \KL(\mu \| \nu) \ge \KL(\sfK_n\circ\mu\|\sfK_n\circ\nu)= \KL(\mu_n \| \nu_n), \qquad \forall \, n \in \mathbb{N}.
\end{equation}
Therefore, it also follows that $\KL(\mu \| \nu) \geq \limsup_{n \to \infty}\KL(\mu_n \| \nu_n)$, proving $(ii)$.

\medskip
Finally, we prove $(iii)$. Let $\frH^{J_n}$ be a $J_n$-dimensional Hilbert space. Let $\{\phi_{n,j}\}_{1\leq j \leq J_n}$ be an orthonormal basis. Define the sequence $(\hat\rho_n)_{n\in \mathbb{N}}$ by $\hat\rho_n = \sum_i  \mu(A_{n,j} )\ket{\phi_{n,j}}\bra{\phi_{n,j}} $. Define $(\hat\sigma_n)_{n\in \mathbb{N}}$ similarly. Because $\hat{\rho}_n$ commutes with $\hat{\sigma}_n$ we find $\KL(\mu_n\|\nu_n)=\BS(\hat{\rho}_n\|\hat{\sigma}_n)$. Now, define the sequence of CPTP maps $(\Xi_n:\frH^{J_n}\to \frH)_{n\in \mathbb{N}}$ by the Kraus operators $K_{n,j} = \ket{\psi_{n,j}}\bra{\phi_{n,j}} $. Then
\begin{equation}
\begin{aligned}
        &\sum_{j=1}^{J_n}  K_{n, j}^\dagger K_{n,j} = \sum_{j=1}^{J_n} \ket{\phi_{n,j}}\bra{\phi_{n,j}} = I_{J_n},\\
        \Xi_n(\hat\rho_n) = &\sum_{j=1}^{J_n}  K_{n,j}\hat{\rho}_nK_{n,j}^\dagger=\sum_{j=1}^{J_n}  \mu(A_{n,j} )\ket{\psi_{n,j}} \bra{\psi_{n,j}}  = \rho_n,
\end{aligned}
\end{equation}
where $I_{J_n}$, is the $J_n$ dimensional identity. Because $\BS$ satisfies the quantum DPI (cf.\ Proposition~\ref{prop:quantumdpi}), it follows that $\KL(\mu_n\|\nu_n)=\BS (\hat\rho_n \| \hat\sigma_n) \geq \BS(\rho_n \| \sigma_n) $ for every $n \in \mathbb{N}$.
\end{proof}

\section{$f$-divergences}
\label{app:fdivergences}

Theorem \ref{thm:maintheorem} can be directly extended to the general class of classical $f$-divergences. \begin{definition}[$f$-divergence]
\label{def:fdiv}
Let $f:(0,\infty)\rightarrow \mathbb{R}$ be a convex function with $f(1)=0$ and $f(0):=\lim_{x\downarrow 0}f(x)$. Let $\mu,\nu$ be two probability measures on a measurable space $(\frX, \calF)$. Then the  $f$-divergence of $\mu$ with repsect to $\nu$ is defined as 
\begin{equation}
    \fdiv(\mu \| \nu):= \begin{cases}
        \displaystyle\int_\frX f\biggl(\frac{d \mu}{d \nu}\biggr)  \nu(dx)  & \text{if } \mu \ll \nu, \\
        +\infty & \text{else.}
        \end{cases}
\end{equation}
If $\mu\ll \nu\ll \zeta$ with $u:=d\mu/d\zeta$ and $v:=d\nu/d\zeta$, then also
    \begin{equation}\label{eq: f-div dominant measure}
                \fdiv(\mu \| \nu)  = \int_\frX v f\biggl(\frac{u}{v} \biggr) \zeta(dx).
    \end{equation}
\end{definition}
It follows that $\KL=\fdiv$ for $f(x)=x\log(x).$ Similarly to how classical $f$-divergences generalize the KL divergence, maximal $f$-divergences generalize the Belavkin-Staszewski relative entropy. This subclass of quantum $f$-divergences was first introduced in Ref.~\cite{PetzRuskaiMaximal}, and subsequently named and studied by Matsumoto in Ref.~\cite{Matsumoto2018AF-Divergenceb}.
\begin{definition}[Maximal $f$-divergence]
    Let $f:(0,+\infty) \to \mathbb{R}$ be an operator-convex function with $f(1)=0$. Let $\rho , \sigma \in \scrS_+(\frH)$. Then the maximal $f$-divergence is defined as
    \begin{equation}
       \fdiv^{\max}(\rho \| \sigma) \coloneq \Tr\left[\sigma f\left(\sigma^{-1/2} \rho\sigma^{-1/2}\right) \right].
    \end{equation}
    For non-invertible $\rho, \sigma \in \scrS(\frH)$ define
    \begin{equation}
        \fdiv^{\max}(\rho \| \sigma) \coloneq \lim_{\epsilon\to 0} \fdiv^{\max}(\rho + \epsilon I\| \sigma + \epsilon I).
    \end{equation}
\end{definition}
\begin{theorem}[$f$-divergence equivalence]
\label{thm:extended theorem}
Let $\rho , \sigma \in \scrS_+(\frH)$. Let $f:(0,+\infty) \to \mathbb{R}$ be an operator-convex function with $f(1)=0$. There exists a common (non-orthogonal) basis $\{\psi_j\}_{j=1}^n$ realizing both $\rho,\sigma$ with measures $\mu_{\CB},\nu_{\CB}$ such that
\begin{align}         \fdiv^{\max}(\rho||\sigma)\overset{(\bigstar)}{=}\fdiv(\mu_{\CB}\| \nu_{\CB}) \overset{(\blacktriangle)}{=} \fdiv^{\mathtt{UNR}}(\rho||\sigma),
\end{align}
where 
\begin{equation}
\fdiv^{\mathtt{UNR}}(\rho||\sigma):=
    \inf_{\mu, \nu \in \calP(\scrP(\frH))} \left\{\fdiv(\mu \| \nu) : \rho = \int_{\scrP(\frH)} |\psi\rangle\langle\psi| \,\mu(\di\psi),\, \sigma = \int_{\scrP(\frH)} |\psi\rangle\langle\psi| \,\nu(\di\psi) \right\}.
\end{equation}
\end{theorem}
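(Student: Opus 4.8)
The plan is to transcribe the two-part proof of Theorem~\ref{thm:maintheorem} almost verbatim, replacing $\log$ by $f$ and $\BS$ by $\fdiv^{\max}$ throughout. The crucial point is that the common-basis construction of Theorem~\ref{thm: common basis} depends only on the pair $(\rho,\sigma)$ and not on the choice of divergence, so I would fix a common basis $\{\psi_j\}_{j=1}^n\in\CB(\rho,\sigma)$ with dual basis $\{\psi_j^\perp\}_{j=1}^n$, coefficients $\rho_i,\sigma_i\in(0,1]$ (strictly positive by faithfulness), and measures $\mu_{\CB}=\sum_i\rho_i\delta_{\psi_i}$, $\nu_{\CB}=\sum_i\sigma_i\delta_{\psi_i}$. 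On these discrete measures the classical $f$-divergence is $\fdiv(\mu_{\CB}\|\nu_{\CB})=\sum_i\sigma_i f(\rho_i/\sigma_i)$, so both equalities reduce to identifying this single sum.

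For $(\bigstar)$, the one genuinely new computation replaces the $\log(\rho\sigma^{-1})$ manipulation of the Belavkin--Staszewski proof. I would observe that $M:=\sigma^{-1/2}\rho\sigma^{-1/2}$ is similar to $\rho\sigma^{-1}$ via $\sigma^{1/2}$, since $\sigma^{1/2}M\sigma^{-1/2}=\rho\sigma^{-1}$. Functional calculus is preserved under similarity, so $f(M)=\sigma^{-1/2}f(\rho\sigma^{-1})\sigma^{1/2}$, and cyclicity of the trace gives $\fdiv^{\max}(\rho\|\sigma)=\Tr[\sigma f(M)]=\Tr[\sigma f(\rho\sigma^{-1})]$. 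I would then reuse the spectral representation already established in the proof of Theorem~\ref{thm:maintheorem}, namely $\rho\sigma^{-1}=\sum_i(\rho_i/\sigma_i)\ket{\psi_i}\bra{\psi_i^\perp}$, which by Lemma~\ref{lem:lams} has positive real spectrum lying in the domain $(0,\infty)$ of $f$, so that $f(\rho\sigma^{-1})=\sum_i f(\rho_i/\sigma_i)\ket{\psi_i}\bra{\psi_i^\perp}$. Expanding $\Tr[\sigma f(\rho\sigma^{-1})]$ in the common basis, the dual-basis relation $\braket{\psi_i^\perp|\psi_j}=\delta_{ij}$ collapses the double sum to $\sum_i\sigma_i f(\rho_i/\sigma_i)=\fdiv(\mu_{\CB}\|\nu_{\CB})$, proving $(\bigstar)$; the choice $f(x)=x\log x$ recovers the Belavkin--Staszewski identity as a sanity check.

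For $(\blacktriangle)$, I would reproduce the discretization argument of Lemma~\ref{lem:discretization} line by line. Its part $(i)$ (trace-norm convergence $\rho_n\to\rho$, $\sigma_n\to\sigma$ via Lemma~\ref{lem:continuity lambda}) is divergence-independent. The analog of $(ii)$, $\fdiv(\mu\|\nu)\geq\limsup_n\fdiv(\mu_n\|\nu_n)$, follows from the classical DPI applied to the Markov kernels $\sfK_n$, since every $f$-divergence satisfies the DPI. For the analog of $(iii)$, I would again introduce the diagonal states $\hat\rho_n,\hat\sigma_n$ and the CPTP maps $\Xi_n$: because commuting states make the maximal $f$-divergence reduce to the classical $f$-divergence of eigenvalues, $\fdiv(\mu_n\|\nu_n)=\fdiv^{\max}(\hat\rho_n\|\hat\sigma_n)$, and the quantum DPI for maximal $f$-divergences gives $\fdiv^{\max}(\hat\rho_n\|\hat\sigma_n)\geq\fdiv^{\max}(\rho_n\|\sigma_n)$. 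Combining these with lower semi-continuity of $\fdiv^{\max}$ in $d_{\mathtt{TR}}$ and $(\bigstar)$ yields
\begin{equation}
\fdiv(\mu\|\nu)\geq\limsup_{n\to\infty}\fdiv(\mu_n\|\nu_n)\geq\liminf_{n\to\infty}\fdiv^{\max}(\rho_n\|\sigma_n)\geq\fdiv^{\max}(\rho\|\sigma)=\fdiv(\mu_{\CB}\|\nu_{\CB})
\end{equation}
for every admissible pair $\mu,\nu$, so the infimum defining $\fdiv^{\mathtt{UNR}}$ is attained at $\mu_{\CB},\nu_{\CB}$, giving $(\blacktriangle)$.

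The main obstacle is not the algebra, which transcribes cleanly, but verifying that the three structural properties invoked for $\BS$ survive the passage to a general operator-convex $f$: the quantum DPI for $\fdiv^{\max}$, its lower semi-continuity with respect to $d_{\mathtt{TR}}$, and its reduction to the classical $f$-divergence on commuting states. All three hold in Matsumoto's framework precisely under the operator-convexity hypothesis assumed in the statement, so I would isolate them as cited inputs from Ref.~\cite{Matsumoto2018AF-Divergenceb}. The secondary point requiring care is the functional calculus of the non-Hermitian but diagonalizable operator $\rho\sigma^{-1}$; one must confirm, via faithfulness and Lemma~\ref{lem:lams}, that its spectrum is strictly positive and real so that $f$ is applied only within its domain.
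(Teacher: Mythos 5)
Your proposal is correct and follows exactly the route the paper itself prescribes: the paper only remarks that the proof is ``analogous to that of Theorem~\ref{thm:maintheorem}'' with the BS entropy replaced by $\fdiv^{\max}$ for $(\bigstar)$ and with the classical DPI, the quantum DPI and lower semi-continuity of $\fdiv^{\max}$, and its reduction to the classical $f$-divergence on commuting states for $(\blacktriangle)$ --- precisely the three cited inputs you isolate. Your similarity computation $\Tr[\sigma f(\sigma^{-1/2}\rho\sigma^{-1/2})]=\Tr[\sigma f(\rho\sigma^{-1})]=\sum_i\sigma_i f(\rho_i/\sigma_i)$ and the transcription of Lemma~\ref{lem:discretization} supply more detail than the paper records, and both check out.
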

The proof of the extended theorem is analogous to that of Theorem~\ref{thm:maintheorem}. The only modifications include the substitution of the expression for maximal $f$-divergences in place of the BS relative entropy to prove ($\bigstar$); and the use of the DPI for classical $f$-divergences, together with the properties of monotonicity, lower semi-continuity, and reduction to the classical $f$-divergence in the commutative case of maximal $f$-divergences, to establish ($\blacktriangle$). 

\section{Calculation KL divergence derivative under SSE evolution}
\label{app:lindbladflow}

For the time derivative of $\KL(\mu_t\|\nu_t)$, where $\mu_t$, $\nu_t$ evolve according to the Fokker-Planck equation of Eq.~\eqref{eq:fokker-planck}, we find
\begin{equation}
\begin{aligned}
    \frac{\di}{\di t} \KL(\mu_t\|\nu_t)
    &=\int_{\scrP(\frH)}\left(\partial_t\log u_t\right)\mu_t(\di\psi)+\int_{\scrP(\frH)} \log u_t \partial_t\mu_t(\di\psi)\\
    &=\int_{\scrP(\frH)} \partial_tu_t\,\nu_t(\di\psi) + \int_{\scrP(\frH)} \log u_t L^*\mu_t(\di\psi)= -\int_{\scrP(\frH)} u_t \partial_t\nu_t(\di\psi)+\int_{\scrP(\frH)} u_t L \log u_t\,\nu_t(\di\psi)\\
    &= \int_{\scrP(\frH)} \bigl( u_t L \log u_t-Lu_t \bigr)\nu_t(\di\psi),\qquad u_t:= \frac{\di\mu_t}{\di\nu_t},
    \end{aligned}
\end{equation}
where $L$ is the infinitesimal generator and used $\partial_t u_t \nu_t + u_t\partial_t\nu_t = L^*\mu_t$. Noticing that
\[
    uL \log u = Lu - \frac{1}{2}\sum_j \frac{\bigl(Du[S_j\psi]\bigr)^2}{u},
\]
we conclude that
\begin{equation}
    \frac{\di}{\di t} \KL(\mu_t\|\nu_t)=-\frac{1}{2}\sum_j\int_{\scrP(\frH)}\bigl(D\log u_t(\psi) [S_j\psi]\bigr)^2\mu_t(\di\psi)=-2\sum_j\int_{\scrP(\frH)}\bigl(D\sqrt{u_t}(\psi)[S_j\psi]\bigr)^2\nu_t(\di\psi)\leq 0.
\end{equation}

\end{document}